\documentclass[twocolumn]{apie}
\usepackage{amsmath,amsfonts,amsbsy}
\usepackage{bm}
\usepackage{algorithm}
\usepackage{algorithmic}
\usepackage{array}
\usepackage{amsthm}
\usepackage{hyperref}
\usepackage{tikz}
\usepackage{subcaption}
\usetikzlibrary{arrows.meta}
\usetikzlibrary{shapes.geometric, arrows}
\usepackage{circuitikz}
\usetikzlibrary{shapes}
\usepackage{tikz}
\usetikzlibrary{positioning,shapes,arrows,arrows.meta}
\tikzstyle{startstop} = [draw, rounded rectangle, text centered, draw=black,thick]
\tikzstyle{io} = [draw=black,thick,trapezium, trapezium left angle=70, trapezium right angle=110,text centered]
\tikzstyle{process} = [rectangle, text centered, draw=black,thick]
\tikzstyle{decision} = [diamond, text centered, draw=black,thick]
\tikzstyle{arrow} = [-{Stealth[scale=1.2]},rounded corners,thick]
\usepackage{textcomp}
\usepackage{lipsum}
\usepackage{stfloats}
\usepackage{tabularx,booktabs}
\usepackage{url}
\usepackage{diagbox}
\usepackage{verbatim}
\usepackage{multirow}
\usepackage{graphicx}
\usepackage{float}
\usepackage{multirow}
\usepackage{algorithmic}
\usepackage{amsmath}
\usepackage{amsthm}
\usepackage{array}
\usepackage{url}
\usepackage{makecell}

\newtheorem{proposition}{Proposition}
\usepackage{threeparttable}
\usepackage{xcolor}
\hyphenation{op-tical net-works semi-conduc-tor IEEE-Xplore}

\usepackage[english]{babel}
\newtheorem{theorem}{Theorem}[section]

\newtheorem{lemma}[theorem]{Lemma}

\newtheorem{Assumption}[theorem]{Assumption}
\newtheorem{Definition}[theorem]{Definition}
%

\usepackage{amsmath}

\usepackage{url}
\newcolumntype{b}{X}
\newcolumntype{s}{>{\hsize=.5\hsize}X}


\hyphenation{op-tical net-works semi-conduc-tor}


\hyphenation{op-tical net-works semi-conduc-tor}



\title{Carbon Emission Flow Tracing: \\Fast Algorithm and  California Grid Study}

\author[1]{Yuqing Shen}
\author[1]{Yuanyuan Shi}
\author[2]{Daniel Kirschen}
\author[3]{Yize Chen}

\affiliation[1]{University of California San Diego}
\affiliation[2]{University of Washington}
\affiliation[3]{University of Alberta}

\definecolor{verylightgray}{rgb}{0.9,0.9,0.9}
\definecolor{lightblue}{rgb}{0.733,0.875,1.0}
\definecolor{metablue}{rgb}{0, 0.392, 0.898}
\definecolor{forestgreen}{rgb}{0.208,0.667,0.235}
\definecolor{verylightblue}{rgb}{0.839,0.925,1.0}
\definecolor{veryverylightblue}{rgb}{0.92,0.96,1.0}

\lstset{
    language=Python,                     
    basicstyle=\ttfamily\small,          
    keywordstyle=\color{black},           
    commentstyle=\color{black},          
    stringstyle=\color{black},             
    numbers=none,                        
    numberstyle=\tiny\color{gray},       
    stepnumber=1,                        
    numbersep=5pt,                       
    backgroundcolor=\color{veryverylightblue},   
    showspaces=false,                    
    showstringspaces=false,              
    frame=none,                          
    breaklines=true,                     
    breakatwhitespace=true,              
    tabsize=4                            
}

\abstract{
Power systems decarbonization are at the focal point of the clean energy transition. While system operators and utility companies increasingly publicize system-level carbon emission information, it remains unclear how emissions from individual generators are transported through the grid and how they impact electricity users at specific locations.
This paper presents a novel and computationally efficient approach for exact quantification of nodal average and marginal carbon emission rates, applicable to both AC and DC optimal power flow problems. 
The approach leverages graph-based topological sorting and directed cycle removal techniques, applied to directed graphs formed by generation dispatch and optimal power flow solutions. Our proposed algorithm efficiently identifies each generator’s contribution to each load node, capturing how emissions are spatially distributed under varying system conditions.
To validate its effectiveness and reveal locational and temporal emission patterns in real world, we simulate the 8,870-bus realistic California grid using the real CAISO data and CATS model~\citep{CATS2024}. Based on year-long hourly data on nodal loads and renewable generation—obtained or estimated from CAISO public data, our method accurately estimates power flow conditions, generation mixes, system total emissions, and delivers fine-grained spatiotemporal emission analysis for every California county.
Both our algorithm and the California study are open-sourced\footnote{\url{https://github.com/yuqing5/Carbon-Tracker-California}}, providing a foundation for future research on grid emissions, planning, operations, and energy policy.
}
\keywords{Carbon emission, carbon tracing, graph theory, optimal power flow}
\date{\today}
\correspondence{Yuanyuan Shi \email{yyshi@ucsd.edu} and \\Yize Chen \email{yize.chen@ualberta.ca}}


\begin{document}

\maketitle

\section{Introduction}
Achieving low-carbon or net-zero emissions in power systems is a critical priority for tackling climate change and enabling a sustainable energy future.
In recent years, many countries, states, and companies have set ambitious decarbonization goals for the next two to three decades~\citep{rockstrom2017roadmap,papadis2020challenges,CAL_Carbon2050,Google_Carbon2030}. As sectors such as transportation, heating, and computing become increasingly electrified—with data centers consuming rapidly growing amounts of power—the demand for clean electricity continues to rise~\citep{xie2024sustainable}. Accurate carbon emission quantification serves a wide range of important applications: supporting emissions accounting, informing operational assessments and green tariff design, and guiding investment decisions on the siting and sizing of carbon-free resources~\citep{ling2024comprehensive}. Reflecting growing public and institutional interest, a few independent system operators have begun publishing system-level emission data, such as total greenhouse gas emissions from CAISO~\citep{caisoemission} and marginal emission rates from PJM~\citep{PJM}.

Fine-grained, high-quality emission data are critical for decarbonizing the power grid by guiding generation dispatch, load shifting and emission-oriented planning~\citep{cheng2022carbon}. Recent greenhouse gas (GHG) Protocol's proposed revisions to Scope 2 guidance also highlights the requirements of such emission tracking capabilities under new carbon accounting standards~\citep{GHG2025}. 
Geographically distributed generators vary widely in technologies and emissions, and the electricity they produce is transported through the interconnected transmission and distribution networks to end users. Consequently, electricity consumed at a given location can reflect a different mix of sources~\citep{kirschen1997contributions} and emission rates~\citep{cho2025pglib,gorka2025electricityemissions}. This makes it critical to understand \emph{when} and \emph{where} emissions occur to inform operational decisions and decarbonization strategies. More recently, large electricity consumers such as manufacturing and IT companies face challenges in quantifying Scope 2 emissions—indirect emissions from purchased electricity—due to the lack of time- and location-specific data tied to electricity usage~\citep{carbonpresentation}. High-resolution emission data are increasingly needed by policymakers, grid operators, and electricity end users who aim to track and reduce carbon impacts. These stakeholders can leverage high-fidelity spatiotemporal emission data to evaluate climate policies~\citep{li2013comparing}, provide the right incentives~\citep{li2022combined}, and design emission-aware planning and operational models~\citep{lindberg2021guide, shaocarbon} that support a just and efficient clean energy transition. 

However, analyzing fine-grained spatiotemporal emission patterns remains challenging due to the inherent complexity of the power grid. The network structure, time-varying load and generation profiles, and physical limits on line flows make it difficult to attribute emissions from generators to specific locations and times. As a result, most estimates are coarse, system-level aggregates based on total energy mix, with limited spatial resolution. This limitation has real consequences. Some emission reduction strategies can lead to unintended effects—for instance, \citep{gorka2024electricityemissions} shows that load reduction at certain nodes may actually increase overall system emissions. 

Two key metrics in this context are the \emph{locational average emission} rate (LAE) capturing total emissions per MWh over the entire electricity usage, and the \emph{locational marginal emission} rate (LME) reflecting the emissions impact of a small change in local demand. The former, LAE, is particularly useful for quantifying total emissions of end users, supporting Scope 2 emission reporting and sustainability goal tracking~\citep{GHG2025}. The latter parallels the concept of locational marginal price (LMP) and captures the sensitivity of emissions to local demand changes, making it well-suited for demand-side response, load shifting, and carbon reduction incentives.

\begin{figure}[th]
    \centering
    \includegraphics[width=0.99\linewidth]{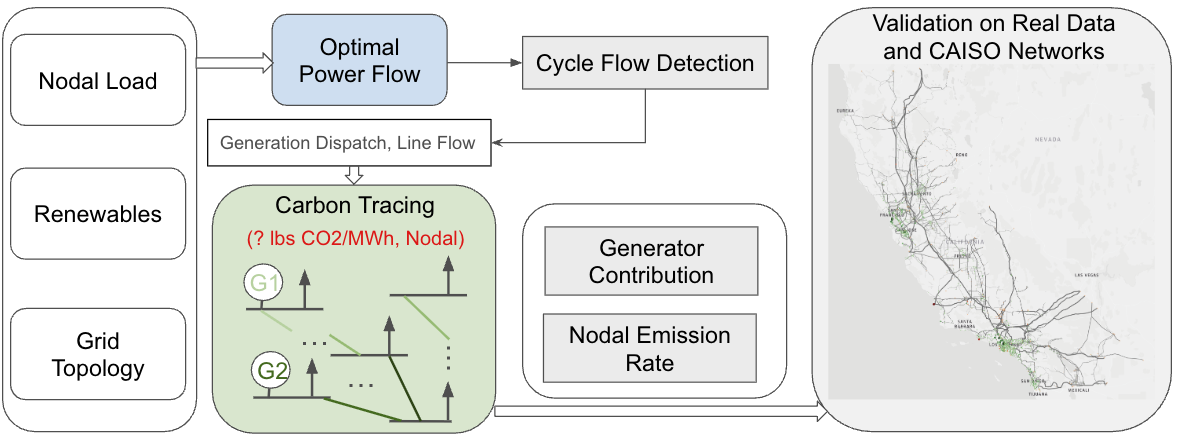}
    \caption{Proposed framework for carbon tracing, fully compatible with ACOPF and applicable for nodal average and marginal emission rate.}
    \label{fig:flowchart}
\end{figure}
In this work, we develop a novel carbon tracing algorithm that supports both AC and DC power flow models and efficiently computes LAE and LME.
Our method takes  the grid topology, nodal demand, and renewable and conventional generation as input, and computes the exact contribution of each generator to each bus and transmission line. 
One key challenge in carbon tracing arises from cycle flows in the directed power flow graph, which can result from modeling artifacts, ACOPF non-convexity, or topological errors. We address this challenge by identifying and eliminating cycles using strongly connected components (SCCs), collapsing each SCC into a supernode. This transformation yields a directed acyclic graph, enabling an order of magnitude  faster computation of nodal emission rates with linear complexity. We validate our approach through a large-scale case study using the California Test System (CATS) model~\citep{CATS2024}. CATS provides geographically accurate, realistic-yet-public grid topology data and is compatible with CAISO generation and demand records. Using approximated nodal demand and renewable generation, we solve the optimal power flow (OPF) problem to estimate power dispatch decisions, line power flows, and compute nodal emission rates. To validate the fidelity of our approach, we aggregate our nodal estimates and compare them to CAISO's reported system-wide emissions. Across a full year of hourly data, our method achieves total emission estimates within 6\% of CAISO’s published values. Fig.~\ref{fig:flowchart} illustrates the overall procedure. The major contributions are summarized as follows:

\begin{itemize}
\item We develop a novel and computationally efficient carbon tracing approach, which is built upon physical power flow, and can be applied to both locational average and marginal emission rate calculations.
\item Proposed method is fully compatible with full AC power flow, which reduces SCCs and avoids matrix inversions. The method and findings presented here offer a quantitative foundation for practices on locational emission analysis, including renewable impact assessment, siting of low-carbon resources, and generation expansion planning. Together, these tools can inform actionable strategies for emission reduction and sustainable energy transitions.
 \item To the best of our knowledge, we provide the first study to conduct nodal emission analysis on a realistic ISO-level power system, and can provide hourly-matched and highly accessible carbon tracing. Results indicate there could be over 100 times difference in terms of county-level emission rate, highlighting substantial spatial disparities driven by generation mix and power flow. 
\end{itemize}

\subsection{Related Works}

\emph{Emission Metrics for Power Grids:} 
The recent surge in global computing demand—driven by cloud services and hyperscale data centers—has increased pressure on electric grids and carbon footprints. To mitigate the environmental impact, recent efforts include carbon-intelligent computing that shifts workloads to cleaner hours or regions~\citep{radovanovic2022carbon}, as well as emerging solutions that align computing with grid conditions to enable zero-carbon cloud operations~\citep{utilityarticle}.
At the same time, there is ongoing debate about how best to quantify electricity-related emissions~\citep{cho2025pglib,gorka2025electricityemissions}. System-level emission metrics reflect macro-scale net emissions but largely overlook geographic variations. To capture location-based emission rates, prior works have used solar irradiance and meteorological data (e.g., temperature and wind speed) to estimate renewable generation and utilization across specific regions~\citep{zhang2025review}. These location-specific emission rate metrics are also compatible with carbon pricing mechanisms~\citep{li2022combined}, which are designed to incentivize greenhouse gas reductions on both the generation and demand sides~\citep{cheng2019low}. However, these methods primarily provide regional-level estimates of grid-side carbon intensity~\citep{maji2022carboncast}.

More recently, LAEs and LMEs have gained significant attention as effective metrics for enabling load shifting and emission reduction in energy-intensive applications~\citep{lindberg2021guide}. On the one hand, model-based approaches explicitly calculate such metrics using dispatch models. For instance, to calculate LMEs, implicit differentiation  take dynamic constraints like storage and ramp-constraint generators into consideration~\citep{valenzuela2023dynamic}. While emission sensitivity analysis for LMEs and LAEs based on market clearing model are also studied~\citep{lu2024market, kang2015carbon}. Model-based approaches in the literature often require matrix inversion, which is both computationally expensive and applicable mainly for linearized power flow. On the other hand, data-driven LME estimation methods can face generalization challenges across varying topology and load~\citep{mayes2024using,shaocarbon}.

While most prior efforts address the quantification of LMEs and LAEs  separately, our approach provides a unified framework based on physical power flow tracing. 
Last but not least, methods reviewed above—including ours—fall under the category of location-based emission accounting. In contrast, market-based approaches~\citep{hulshof2019performance} estimate carbon emissions based on contractual arrangements for electricity procurement, such as Renewable Energy Certificates (RECs) or Power Purchase Agreements (PPAs), which is beyond the scope of this work.

\begin{figure*}[h]
    \centering
    \includegraphics[width=0.99\linewidth]{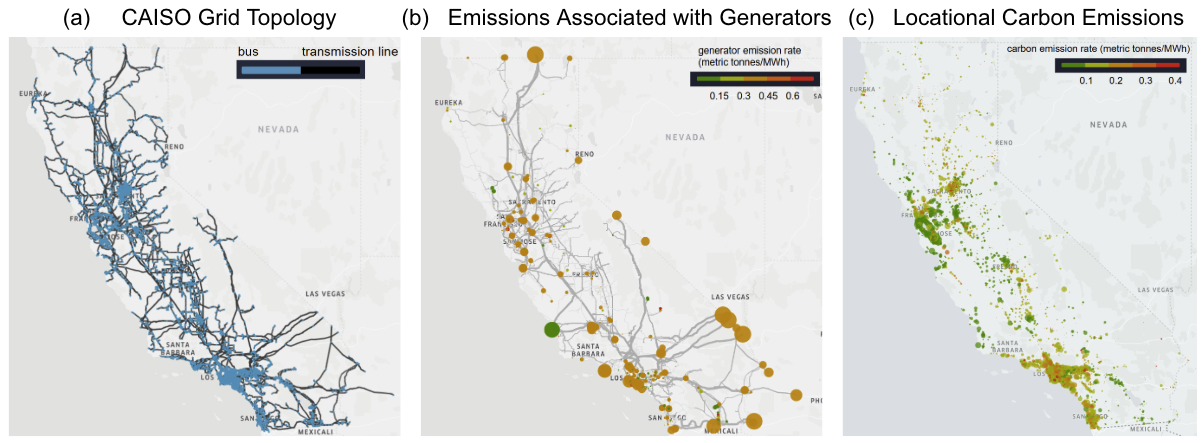}
    \caption{Our proposed module can not only find locational emission rates in fast algorithms, but also provide visualization of: (a). Transmission grid topology; (b). Emissions associated with different generators, where the node size denotes the generation level; (c). Nodal average carbon emission rates, where the node size denotes the demand level. Shown results are based on CATS model for California grid on June 17th, 2019.}
    \label{fig:large}
\end{figure*}

\emph{Carbon Tracing and Carbon Accounting:}
Pioneering works have established country-wide, real-time emission accounting frameworks. However, many do not model the exact physical power and carbon flows~\citep{electricitymaps}, or are limited to estimating emissions at the balancing-authority or broader regional levels. For example, US-wide and European grid emission tracking are explored in~\citep{de2019tracking} and~\citep{tranberg2019real}, respectively, but these studies rely on macro-level flow models rather than detailed transmission-level flows. Other methods~\citep{chen2024carbon, kang2015carbon} apply proportional sharing principles to allocate generator output and use matrix inversion to estimate the share of each generator’s power—and its associated emissions—attributed to each load.

Our work is most closely related to the network-based locational emission tracing~\citep{kang2015carbon, chen2024carbon}, which is recently implemented as the CarbonFLow\texttrademark \; tool by MISO in US~\citep{MISO2025}. While such works assume a known, invertible power distribution factor matrix, limiting the use cases to DCOPF solutions without cycle flows with at least $\mathcal{O}(N^3)$ complexity. Built upon our previous tree-based method by running depth first search (DFS) to trace back each generator's emission contribution and power proportions at lines and buses \citep{chen2024contributions}, this work is fully compatible with full ACOPF with much reduced computation complexity.

\section{Preliminaries}
\subsection{Overview of the Proposed Framework}
In this work, we aim to quantify each node’s carbon emission rate from power usage. At a given time snapshot, our approach takes the load vector and solves a general optimal power flow (OPF) problem using standard formulations and solvers. Once a feasible solution for generation dispatch and line flows is obtained, our algorithm traces each generator’s contribution to each demand node. To achieve so, we implement a tree-based algorithm recursively, until all generators are enumerated. This allows the power from every generator to be allocated and attributed to each demand node. 

The full workflow is illustrated in Figure~\ref{fig:flowchart}, and we particularly evaluate our algorithm for the whole California grid based on the open-sourced CATS model~\citep{CATS2024}, which preserves transmission line accuracy and reflects grid fidelity at the same time. We begin with grid topology, renewable generation and line parameters acquired from CAISO and CATS repositories. Nodal load and renewables data are then fed into the OPF solver to obtain corresponding generation and line power flow values (Section~\ref{sec:opf}). Once OPF instances are solved with feasibility and optimality guarantee, our carbon tracing algorithm computes the exact nodal power supply contributed by each generator with cycle flows considered (Sections~\ref{sec:carbon_tracing1} and~\ref{sec:carbon_tracing2}). Locational emission rates are then derived based on EPA~\citep{EPA1} and EIA~\citep{EIA1} actual generator's emission rate datasets (see Table \ref{table:emission_rate} in Appendix).  

\subsection{OPF and Power Dispatch}
\label{sec:opf}
We consider a connected power network $G=(\mathcal{N}, \mathcal{L})$ with $N$ nodes and $L$ power lines.  $\mathcal{N}$ denotes the set of all nodes, and $\mathcal{L}$ the set of lines.  Also let $\mathcal{G}$ denote the set of generators. At each timestep, given load vector  $\mathbf{p}^d \in \mathbb{R}^N$, we assume the system operator solves the OPF problem and finds the power dispatch $\mathbf{p}^g \in \mathbb{R}^{K}$ for $K$ generators. The corresponding line flow $\mathbf{p}_l\in \mathbb{R}^{L}$ is also solved. Without loss of generality, for line pair $(i,j)\in \mathcal{L}$, we use $p_{ij}$ and $q_{ij}$ to denote the directed  active and reactive line flow from node $i$ to node $j$. We use $\mathcal{N}^+_i, \mathcal{N}^-_i$ to denote the set of neighbor nodes that send power to and receive power from node $i$, respectively.
\begin{subequations}
\label{equ:opf}
\begin{align}
\min_{\{V_i, \theta_i, p_{i}^g, q_{i}^g\}} \quad \; & \sum_{i \in \mathcal{G}} C_i(p_{i}^g) \label{equ:obj}\\
\textit{subject to: \quad }& \notag \\
 p_{i}^g - p_{i}^d = &\sum_{j \in \mathcal{N}} V_i V_j (G_{ij} \cos\theta_{ij}  + B_{ij} \sin\theta_{ij})\label{equ:active}\\
 q_{i}^g - q_{i}^d = & \sum_{j \in \mathcal{N}} V_i V_j (G_{ij} \sin\theta_{ij} - B_{ij} \cos\theta_{ij})\label{equ:reactive}\\
P_{G_i}^{\min} \leq & p_{i}^g \leq P_{G_i}^{\max}\label{equ:Plimit}\\
Q_{G_i}^{\min} \leq & q_{i}^g \leq Q_{G_i}^{\max}\label{equ:Qlimit}\\
V_i^{\min} \leq & |V_i| \leq V_i^{\max}\label{equ:Vlimit}\\
|S_{ij}| \leq & S_{ij}^{\max}\label{equ:Slimit}
\end{align}
\end{subequations}

In the formulation above, we use $C_i(\cdot)$ in \eqref{equ:obj} to denote the cost function of generator $i$; $\theta_i$ represents the phase angle at bus $i$. $p_{i}^{g}$ and $p_{i}^{d}$ represent the power generation and demand at node $i$. Equations \eqref{equ:active}–\eqref{equ:Slimit} impose constraints on nodal active and reactive power balance, generator limits, voltage magnitude, and line flows. By solving this optimization problem, we obtain the power generation and line flow that minimize the total cost.
Note that we consider the general form of ACOPF here, although our method is also compatible with alternative formulations—including those involving energy storage, multi-step optimization, and DCOPF—as long as power dispatch and line flows are solved. This is a distinguishing feature of our work compared to previous studies~\citep{MISO2025, kang2015carbon, chen2024carbon}, as detailed in later sections.

Once OPF problem \eqref{equ:opf} is solved, the nodal generations $\mathbf{p}^g$ and line flow $\mathbf{p}_l$ are revealed to the system operators. We denote the directed graph given by the OPF solution as $G^{\text{OPF}}$. Let $\gamma_k$ denote the carbon emissions rate of generator $k$, then the total emissions associated with this generator can be quantified as $\gamma_k \cdot p_k^g$.  We are interested in quantifying how much emissions are transported from generation to consumption through solved power flows $p_{ij}$ with emission rate $\gamma_{ij}$.

\subsection{Flow-Based Locational Emission Accounting}

To model how emissions are carried over by power flow, we follow two fundamental principles in our carbon tracing mechanism: 1) At each node $i$, there is a conservation of aggregated emissions, so that net carbon inflows equal the net carbon outflows~\footnote{We term the emissions associated with power inflow and nodal generations as net carbon inflows; emissions associated with power outflow and nodal consumptions as net carbon outflows}; 2) At each node and line, there is a perfect mix of power flow, such that emissions quantity are proportional to power flow quantity, e.g., in Fig. \ref{fig:toy_example} the average emission rate at load bus 3 is $\delta(p_3^d)=\frac{\gamma_{13} p_{13}+\gamma_{23} p_{23}}{p_{13}+p_{23}}$ based on the power flow. In addition, we also make the following assumption regarding the proportional share of power flow at each node, which is also widely used in previous literature~\citep{kirschen1997contributions, chen2024contributions, kang2015carbon}: 

\begin{Assumption}
\label{assumption}
For any node $i$, if the proportion of the inflow which can be traced to generator $k$ is $\alpha_{i}(p_k^g)$, then the proportion of the outflow which can be traced to generator $k$ is also $\alpha_i(p_k^g)$.
\end{Assumption}
Assumption \ref{assumption} provides a principle to proportionally allocate the inflow power and further the inflow carbon emissions by nodal demand and output power flow. 

Our goal is to find the carbon emissions associated with each load node. We want to compute the LAE rate $\delta(p_i^d)$ and LME rate $\mu(p_i^d)$. Both rates have a unit of metric tonnes CO$_2$/MWh. 
To achieve this goal, we find it possible to follow the physical interpretations of average carbon emission rate $\delta(p_i^d)$ via taking the division of nodal-level net emission by power demand $p_i^d$. The key for computing the total carbon emission is to find generator $k$'s contribution $p_i^d(p_k^g)$ to supply $p_i^d$ in MW\footnote{Throughout the paper, we use $p_i(x)$ to denote the power contributed by $x$ to node $i$, where $x$ denotes either generator or inflow. Similar definitions hold for line flow $f_{l}(x)$}, then we can compute the total emission associated with $p_i^d$ as $e(p_i^d)=\sum_{k=1}^{K} \gamma_k \cdot p_i^d(p_k^g)$. Thus, each node's average carbon emission rate (with $p_i^d>0$) is computed as,
\begin{equation}
\label{equ:emission}
    \delta(p_i^d)=\frac{\sum_{k=1}^{K} \gamma_k \cdot p_i^d(p_k^g)}{p_i^d}.
\end{equation}

In a similar vein, we can also define nodal marginal emission rate by locally perturbing the nodal load $p_i^d$: 

\begin{equation}
    \mu(p_i^d)=\frac{\partial ( \sum_{k=1}^{K} \gamma_k \cdot p_k^g ) }{\partial p_i^d}.
\end{equation}

\begin{figure}[h]
    \centering
    \includegraphics[width=0.95\linewidth]{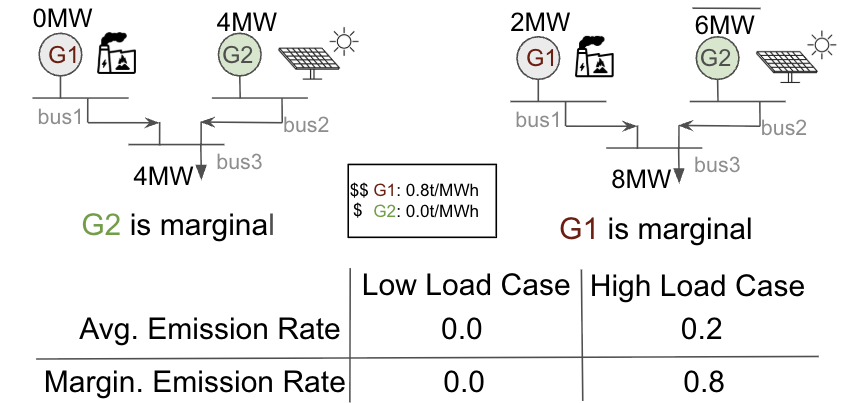}
    \caption{A 2-Generator, 1-Load illustration of average and marginal emission rate. }
    \label{fig:toy_example}
\end{figure}

In Fig. \ref{fig:toy_example}, we show a 3-bus toy example to illustrate how locational emission rates are quantified and varied due to load and generation conditions. In this simplified case, we assume the renewable generation are much cheaper but with fluctuating generations. When the load is small (left), cheaper renewables from Bus 2 are dispatched with $0$ emissions. When the renewable generation is binding at high load times, both Bus 3's average and marginal emission rate are changed due to traditional generator's output. Once power dispatch is determined based on \eqref{equ:opf}, the proposed method will firstly determine each generator's power contribution to each load and then trace generator's contribution to nodal and line flow emissions.

With known $\delta(p_i^d)$, one can also easily verify and calculate system's total emission $E=\sum_i \delta(p_i^d)\cdot p_i^d$. In next Section, we will describe our design for calculating $p_i^d(p_k^g)$ throughout the whole directed network, which is compatible with both ACOPF and DCOPF power dispatch solutions.

\section{Generator's Nodal Emission Contribution}
\subsection{Carbon Tracing Algorithm}
\label{sec:carbon_tracing1}
To determine the carbon emission intensity at each bus, once the OPF problem \eqref{equ:opf} is solved, and both direction and magnitude of line flows $\mathbf{p}_l, \; l\in \mathcal{L}$ are determined, we need to find the proportion of load demand at each bus that is supplied by each generator. If the bus is close to carbon-intensive coal generator and not connect to any renewable generators, it has high carbon emission rate. To determine the carbon intensity, we need to know the inflow of each bus and proportion of a generator contributed to each particular bus. So we construct the line power proportion matrix and the bus power proportion matrix as follows.

As described beforehand, the key to recover $\delta(p_i^d)$ is to find how much power is contributed by each generator $k$. Then it suffices to know every entry of the following Generator Node Distribution Factor (GNDF): 

\begin{Definition} 
    For every bus $i$, denote the Generator Node Distribution Factor (GNDF) as generator $k$'s power allocated to bus $i$ $p_i^d(p_k^g)$  with respect to bus $i$'s total power demand $p_i^d$, which can be computed as $\text{GNDF}_k^i=\frac{p_i^d(p_k^g)}{p_i^d}$.  
    For the whole network, The  GNDF matrix is defined as $\text{GNDF}=\{\text{GNDF}_k^i\}\in \mathbb{R}^{K\times N}$, with $\sum_k \text{GNDF}_k^i=1, \; \forall i$.
\end{Definition}

Similarly, for each line $ij$ we can define how much power is supported by generator $k$ using a matrix termed Generator Line Distribution Factor (GLDF):  

\begin{Definition}  
    For every line $l \in \mathcal{L}$, denote the Generator Line Distribution Factor (GLDF) of generator $k$'s power allocated to line $l$ $p_l(p_k^g)$ with respect to line $l$'s power flow $f_l$, which can be computed as $\text{GLDF}_k^i=\frac{p_l(p_k^g)}{f_l}$. 
    For the whole network, The GLDF matrix is defined as $\text{GLDF}=\{\text{GLDF}_k^i\}\in \mathbb{R}^{K\times L}$, with $\sum_k \text{GLDF}_k^l=1,\;  \forall l$.
\end{Definition}

Now we are ready to infer every entry of $\text{GNDF}$ and $\text{GLDF}$ in an iterative approach. At bus $i$, denote the set of input and output power lines as $\mathcal{L}_{\text{in}}^i$ and  $\mathcal{L}_{\text{out}}^i$ respectively. Then the following relationship holds for entry $\text{GNDF}_k^i$: 
\begin{equation}
\label{equ:GNDF}
\text{GNDF}_k^i=\frac{\sum_{l\in \mathcal{L}_{\text{in}}^i} \text{GLDF}_k^l\cdot f_l}{\sum_{l\in \mathcal{L}_{\text{in}}^i}f_l} = \frac{\sum_{l\in \mathcal{L}_{\text{out}}^i} \text{GLDF}_k^l\cdot f_l}{\sum_{l\in \mathcal{L}_{\text{out}}^i} f_l}.
\end{equation}

Once at node $i$, the value of $\text{GNDF}_k^i$ is calculated, the outflow line power contributed by the generator $k$ can be also determined. This is because given the current bus $i$, for $l \in \mathcal{L}_{\text{out}}^i$, the line power proportion from generator $k$ is equal to the bus power proportion from generator $k$ at bus $i$. 
\begin{equation}
\label{equ:GLDF}
    \text{GLDF}_k^l = \text{GNDF}_k^i, \; \forall l \in \mathcal{L}_{\text{out}}^i
\end{equation}



Equ \eqref{equ:GNDF}-\eqref{equ:GLDF} indicate that to calculate $\delta(p_i^d)$, we can take an iterative approach by traversing through all the nodes and determine all entries in $\text{GLDF}$ and $\text{GNDF}$ matrices. At a given bus $i$, the nodal $\text{GNDF}_k^i$ can be calculated based on known $\text{GLDF}$ from all neighboring upstream nodes. While for all neighboring downstream line flows, $\text{GLDF}_k^l$ can be updated once $\text{GNDF}_k^i$ is known.

Fig. \ref{fig:SCC} provides an illustrative example from the full
ACOPF of the CAISO network based on the CATS model. Snippet of the whole network is shown for visualization purpose. In Fig. \ref{fig:SCC}(b), $n1$ is connected to the generator $G2$ and has no other inflow power lines (noting the direction of the solved power flow, $n1$ only has outflows but not inflows), thus $\text{GNDF}_{G2}^{n1}$ = 1, meaning bus $n1$ is $100\%$ supplied by $G2$. $n1$ transfers power to bus $n3$ and bus $n4$. Based on Equ \eqref{equ:GLDF}, line $f_{n1,n4}$ and $f_{n1,n3}$ are also $100\%$ supplied by $G2$, leading to $\text{GLDF}_{G2}^{n1,n3} = \text{GLDF}_{G2}^{n1,n4} = \text{GNDF}_{G2}^{n1}= 1$. While at bus $n4$, power mix happens, requiring to know the quantity and power mix of $f_{n1,n4}$ and $f_{n3,n4}$ to calculate $\text{GNDF}_{G2}^{n4}=\frac{\text{GLDF}_2^{n1,n4}\cdot f_{n1,n4} + \text{GLDF}_2^{n3,n4}\cdot f_{n3,n4}}{ f_{n1,n4} +f_{n3,n4}}$. With similar logic, we can analyze the rest of the buses in the topological order by iteratively updating these two matrices.

\begin{figure}[tb]
    \centering
    \includegraphics[width=0.95\linewidth]{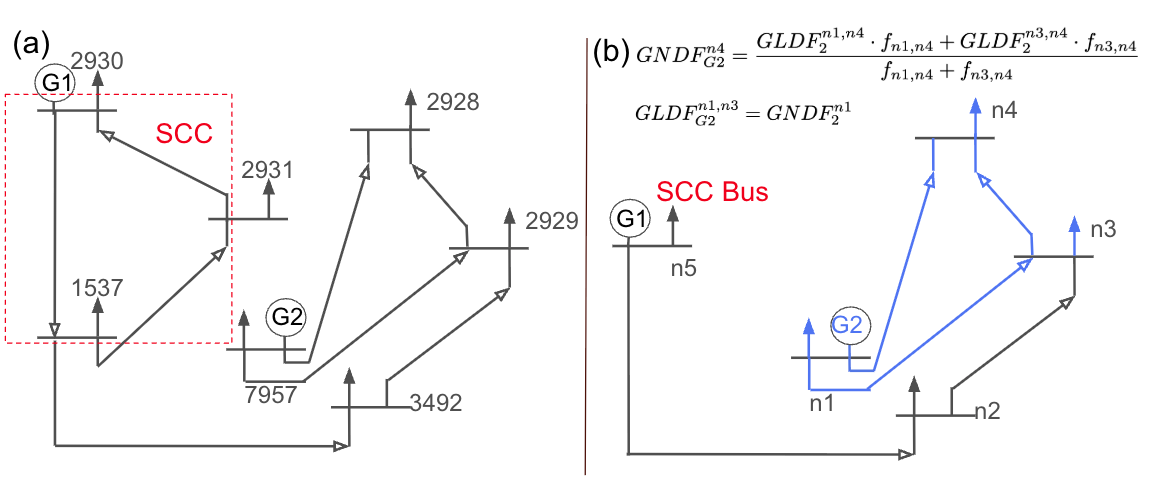}
    \caption{Examples on dealing with strongly connected components (SCC) and carbon tracing process. (a). A SCC exists in the power flow obtained after solving ACOPF in the realistic CAISO network.  (b). Once SCC is equivalently aggregated as one super node $n5$ following approach in \ref{sec:carbon_tracing2}, our proposed approach proceeds by computing the $\text{GNDF}$ and $\text{GLDF}$ matrices following approach in \ref{sec:carbon_tracing1}. Generation $G2$'s contribution to bus $n4$ and line flow $\{n1,n3\}$ are taken as examples. }
    \label{fig:SCC}
\end{figure}

To initialize this iterative process, at the initialization stage, we need to identify the load buses reachable by each generator. To achieve so, we treat each generator $k, k=1,..., K$ as the root node, and use Depth First Search (DFS) to get the buses and lines reachable from this $k$-th generator, which creates a mapping $\mathcal{M}(j)$ from bus $j$ to the generator set. In the mapping, the key is the bus index $j$, and each value is a set of generators supporting this bus's power demand. In the same DFS recursion, we can also get the backtracking map. To initialize the $\text{GNDF}$ and $\text{GLDF}$ matrices and reduce complexity in subsequent calculations for each matrix entry, we flag $\text{GNDF}_k^j$ to 1 if bus $j$ is directly reachable by generator $k$ in the directed graph, and bus $j$ has no other inflow lines, e.g., $\mathcal{L}_{in}^j=\emptyset$. Similarly, $\text{GLDF}_k^l=1$ if line $l$ is flowing out of generator $k$, and $\mathcal{L}_{in}^k=\emptyset$. To initiate the iterative process, we need to store buses with zero in-degrees in an initial queue $\mathcal{Q}$. 
By doing so, in the latter stages we pop buses out of the queue in topological order and process them accordingly.

For each node $i \in \mathcal{N}$, we also compute and store the in-degrees $\text{In\_Deg}(i)\geq 0$ beforehand. In order to identify the starting point for calculating each entry in $\text{GNDF}$ matrix, we utilize Kahn's algorithm~\citep{kahn1962topological} for topological sorting. It is a classic algorithm for deciding a valid ordering of tasks with dependencies on the directed acyclic graph, and is capable of processing very large networks. The process starts with  nodes of $\text{In\_Deg}(\cdot)=0$ first, and remove the outflow lines of these nodes. The iterative process based on topological ordering is illustrated in Algorithm \ref{alg:cap}.
\begin{algorithm}
\caption{Carbon Tracing for Directed Graph}\label{alg:cap}
\begin{algorithmic}[1]
\STATE Initialize: $M_1$ as mapping from nodes to generators
\STATE Initialize:  $\text{GNDF} \in \mathbb{R}^{K \times N}, \text{GLDF} \in \mathbb{R}^{K \times L}$ as matrices of zeros
\STATE Initialize: $\mathcal{Q} \gets $ a queue of nodes with $\text{In\_Deg}(\cdot) = 0$
\STATE $\bm{\gamma} \gets $ Generation emission rates vector
\STATE $\{\text{In\_Deg}(i), i \in \mathcal{N}\} \gets$ a dictionary of bus in-degree
\STATE $\delta_j \gets$ node $j$'s average emission rate $\forall j$ 
\STATE \# Run for all nodes in the graph
\WHILE{$\mathcal{Q} \neq \emptyset$}
\STATE $j \gets$ pop($\mathcal{Q}$) 
\FOR{
$k \in \mathcal{M}(j)$
}
\STATE $\text{GNDF}_k^j \gets \frac{\sum_{l \in \mathcal{L}_{in}^j} \text{GLDF}_k^l\cdot f_l}{\sum_{l\in \mathcal{L}_{in}^j}f_l} $
\FORALL{$l \in \mathcal{L}_{out}^j$}
\STATE     $\text{GLDF}_k^l \gets \text{GNDF}_k^j$
\ENDFOR
\ENDFOR
\FOR{$b_j,v \in \mathcal{L}$ }
\STATE $\text{In\_Deg}(v) \gets \text{In\_Deg}(v) -1$
\IF{$\text{In\_Deg}(v) == 0$}
\STATE $\mathcal{Q} \gets $ add $v$
\ENDIF
\ENDFOR
\ENDWHILE
\RETURN $\delta_j \gets [\mathbf{GNDF}^j]^T \cdot  \bm{\gamma}$ 
\end{algorithmic}
\end{algorithm}

The algorithm stops when there is no more bus in the queue $\mathcal{Q}$, i.e. there is no bus with $\text{In\_Deg}=0$ anymore. 
Kahn's algorithm stops until it processes all the nodes in a directed acyclic graph (DAG). All the buses proportion values and line proportion values in our matrices are calculated as long as our graph topology is DAG. In the DCOPF case, previous works show that no cycle flow exist~\citep{chen2024contributions, chen2022learning}, making our algorithm directly applicable. While in the solutions given by ACOPF, there may exist directed cycle flows, where there exists a non-empty directed trail in which the first and last nodes are equal. To make our carbon tracing algorithms apply to this general setting, we design a cycle detection and elimination framework in the next subsection.  

\subsection{Identifying and Merging Strongly Connected Components}
\label{sec:carbon_tracing2}
To tackle the issue of cycle flows in ACOPF solutions, we adopt the notion of Strongly Connected Components (SCC) in graph theory to transform the original directed graph with SCC(s) into a graph without SCC.

SCC is a maximal set of load buses such that any two buses of this subset are reachable from each other. If a set of buses are in a directed cycle, then they are all reachable from each other, so they are also inside an SCC by definition. Moreover, a directed cycle is a single directed path, while SCC gives the sub-graph where there is a directed path from every vertex in the sub-graph to every other vertex in the sub-graph. Mathematically, let $\text{SCC}_k\in \mathcal{S}$ denote the SCC $k$ inside the whole set of SCC $\mathcal{S}$ given the directed graph $G^{\text{OPF}}$, for any pair of buses $\{i,j\} \in \text{SCC}_k$, $ i \rightarrow j, j \rightarrow i$ denote they are reachable from both directions. 
Subsequently, we apply Tarjan's algorithm~\citep{tarjan1972depth} to identify the strongly connected components (SCCs) within the directed graph. 
Once all SCCs are identified and collapsed into super nodes, the overall graph transforms into a DAG. 

\begin{lemma}
    In the directed graph $G^{\text{OPF}}$, for any two SCCs $\text{SCC}_i, \text{SCC}_j \in \mathcal{S}$, they do not intersect.
\end{lemma}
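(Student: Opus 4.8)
The plan is to argue by contradiction using the maximality property built into the definition of an SCC. Suppose $\text{SCC}_i$ and $\text{SCC}_j$ are two distinct strongly connected components with nonempty intersection, and pick a bus $v \in \text{SCC}_i \cap \text{SCC}_j$. Since $i \neq j$, at least one of the two components contains a vertex not in the other; without loss of generality let $u \in \text{SCC}_i \setminus \text{SCC}_j$. The goal is to show that $u$ and $v$ together with all of $\text{SCC}_j$ form a single strongly connected set, contradicting the maximality of $\text{SCC}_j$ (it should have absorbed $u$).

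First I would recall what mutual reachability gives us. Because $v$ and $u$ both lie in $\text{SCC}_i$, we have directed paths $u \rightarrow v$ and $v \rightarrow u$ in $G^{\text{OPF}}$. Now take any bus $w \in \text{SCC}_j$. Since $v \in \text{SCC}_j$ as well, we have $w \rightarrow v$ and $v \rightarrow w$. Concatenating paths, $u \rightarrow v \rightarrow w$ shows $u$ reaches $w$, and $w \rightarrow v \rightarrow u$ shows $w$ reaches $u$. Hence $u$ is mutually reachable with every vertex of $\text{SCC}_j$, so $\text{SCC}_j \cup \{u\}$ is a set of pairwise mutually reachable vertices (using transitivity of reachability among the vertices already inside $\text{SCC}_j$, plus the new relations through $u$). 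This strictly larger strongly connected set contradicts the maximality of $\text{SCC}_j$ as an SCC. Therefore no such $v$ can exist and $\text{SCC}_i \cap \text{SCC}_j = \emptyset$.

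The only subtlety — and the step I would state carefully rather than wave past — is the transitivity argument: I must confirm that mutual reachability is an equivalence relation on vertices, so that attaching a single mutually-reachable vertex $u$ to an existing SCC keeps the whole set pairwise mutually reachable (one needs $u$ reachable from $w$ and $w$ reachable from $u$ for \emph{every} $w$, not just one, and then pairs within $\text{SCC}_j$ are handled by its own strong connectivity). This is routine once phrased as "reachability is reflexive and transitive, hence mutual reachability is an equivalence relation, and SCCs are exactly its equivalence classes; distinct equivalence classes are disjoint." In fact, the cleanest writeup simply invokes that standard fact: SCCs are the equivalence classes of the mutual-reachability relation on $\mathcal{N}$, and equivalence classes of any equivalence relation partition the underlying set, so two distinct SCCs cannot intersect. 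I would present the contradiction argument above as the self-contained justification, with the equivalence-class remark as the conceptual one-liner. No real obstacle is expected here; the lemma is essentially the observation that the construction in Section~\ref{sec:carbon_tracing2} collapses a genuine partition of the bus set, which is what makes the quotient graph $G^{\text{OPF}}$-after-merging well defined and acyclic.
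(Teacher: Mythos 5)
Your proof is correct and follows essentially the same route as the paper's: assume two SCCs share a bus, use transitivity of reachability through that common bus to show the two components (or one component plus an outside vertex) are mutually reachable, and contradict maximality. Your version is, if anything, slightly more careful than the paper's in explicitly choosing $u \in \text{SCC}_i \setminus \text{SCC}_j$ so that the enlarged set is strictly bigger, but the underlying argument is the same.
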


See proof in Appendix \ref{sec:App_Proof} of our online preprint~\citep{shen2025carbon}. Since SCCs do not intersect each other, this guarantees once SCCs are all identified, all buses in the power network are separated into mutually exclusive groups, either as a member of exclusive, non-intersecting SCC or as a standard node. Thus, we define $\tilde{G}^{\text{OPF}}$ as a DAG, where every SCC is grouped and merged as a super bus. In $\tilde{G}^{\text{OPF}}$, each newly-constructed bus $\tilde{i}$ corresponds to the $\text{SCC}_i$ of $G^{\text{OPF}}$. For nodes residing outside of SCC in $G^{\text{OPF}}$ and connected to the original SCC, they are now directly connected to the merged new node. For the newly indexed bus $\tilde{i}$, we sum over all generations and loads across the original $\text{SCC}_i$ and are denoted as $p_{\tilde{i}}^g$ and $p_{\tilde{i}}^d$ respectively. The following Proposition guarantees that after the transformation from $G^{\text{OPF}}$ $\tilde{G}^{\text{OPF}}$, the solutions of carbon tracing for nodes outside SCCs are not altered:

\begin{proposition}
    Consider the original ACOPF solution graph $G^{\text{OPF}}$ and transformed DAG $\tilde{G}^{\text{OPF}}$. Let $\delta(p_i^d|G^{\text{OPF}})$ and $\delta(p_i^d|\tilde{G}^{\text{OPF}})$ denote node $i$'s average carbon emission rate under graphs $G^{\text{OPF}}$ and $\tilde{G}^{\text{OPF}}$, respectively. For all nodes that are not included in SCCs, $\forall i \notin \mathcal{S}, \delta(p_i^d|G^{\text{OPF}})=\delta(p_i^d|\tilde{G}^{\text{OPF}})$ under all load conditions $\mathbf{p}^d$ and corresponding $\mathbf{p}^g$.
\end{proposition}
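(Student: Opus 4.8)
The plan is to argue by induction along a topological ordering of the contracted graph $\tilde{G}^{\text{OPF}}$. Such an ordering exists because, by the preceding Lemma, the strongly connected components are pairwise disjoint, so every bus is either an ordinary node or a member of a unique $\text{SCC}_i$ and the contraction is well defined; collapsing each $\text{SCC}_i$ to $\tilde{i}$ therefore yields an acyclic graph whose nodes can be linearly ordered with every edge going forward. Processing nodes in this order, I would maintain the invariant: for every directed edge $l$ entering an ordinary node $i$ from an already-processed node or super node, the pair $\bigl(f_l,\{\text{GLDF}_k^l\}_k\bigr)$ coincides in $G^{\text{OPF}}$ and in $\tilde{G}^{\text{OPF}}$. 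Granting the invariant, relation \eqref{equ:GNDF} gives that $\text{GNDF}_k^i$ is the same in both graphs, hence so is $\delta(p_i^d)=\sum_k \gamma_k\,\text{GNDF}_k^i$ by \eqref{equ:emission}; since the argument only uses the structure of the solved flow graph and not which OPF instance produced it, this holds for every $\mathbf{p}^d$ and its induced $\mathbf{p}^g$. An ordinary node all of whose in-neighbours are ordinary nodes (including the base case of no in-neighbours) is handled directly by the induction hypothesis, so the whole content of the argument is in the edges that leave a contracted super node.

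Fix $\text{SCC}_m$ with contracted node $\tilde{m}$. The first substantive step is a \textbf{boundary conservation} identity: summing the nodal flow-balance equation over all $v\in\text{SCC}_m$, every edge internal to $\text{SCC}_m$ enters exactly one balance on its inflow side and one on its outflow side with the same value $f_l$, so the internal edges cancel and one is left with
\begin{equation}
\label{eq:scc-pbal}
\sum_{l\in\mathcal{L}_{\text{in}}^{\tilde{m}}} f_l + p_{\tilde{m}}^g \;=\; \sum_{l\in\mathcal{L}_{\text{out}}^{\tilde{m}}} f_l + p_{\tilde{m}}^d,
\end{equation}
which is exactly the balance the contracted node $\tilde{m}$ satisfies with the aggregated injections. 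Repeating the bookkeeping for the carbon attributable to a fixed generator $k$ — treating a local generator as an extra inflow of rate $\gamma_k$ and using Assumption \ref{assumption} at each $v\in\text{SCC}_m$, so that each internal edge carries $\text{GLDF}_k^l f_l$ of generator-$k$ content into its head and out of its tail — the internal carbon terms also cancel, and the total generator-$k$ content entering $\text{SCC}_m$ on its external in-edges plus $p_k^g\mathbf{1}[k\in\text{SCC}_m]$ equals the content leaving on the external out-edges plus the amount absorbed by the internal loads. Dividing this identity by \eqref{eq:scc-pbal} produces a well-defined aggregate share $\widehat{\text{GNDF}}_k^{\tilde{m}}$, which is precisely the value Algorithm \ref{alg:cap} assigns to $\tilde{m}$ when run on $\tilde{G}^{\text{OPF}}$.

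The \textbf{main obstacle} is the remaining step: promoting this \emph{aggregate} statement to the \emph{edge-wise} statement needed by the invariant, i.e. showing that in $G^{\text{OPF}}$ every external out-edge of $\text{SCC}_m$ already carries $\text{GLDF}_k^l=\widehat{\text{GNDF}}_k^{\tilde{m}}$ — equivalently, that $\text{SCC}_m$ behaves as a single perfectly-mixed node, so that its internal loads and all of its external exits receive generator $k$ in the same proportion. This is the delicate point because the internal flows $f_l$ of the SCC are genuine degrees of freedom of an ACOPF solution: one must verify they drop out not only of the aggregate totals in \eqref{eq:scc-pbal} but also of the shares seen at the boundary. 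I would attempt this by exploiting strong connectivity — any two boundary buses of $\text{SCC}_m$ lie on a common directed closed walk, and iterating the local mixing rules \eqref{equ:GNDF}--\eqref{equ:GLDF} around such walks should force the per-generator shares at the boundary to agree, after which the conservation identity pins that common value to $\widehat{\text{GNDF}}_k^{\tilde{m}}$. Once this is established the induction closes: the in-edges of any ordinary node are, in $\tilde{G}^{\text{OPF}}$, either unchanged ordinary edges covered by the hypothesis or external out-edges of processed super nodes carrying exactly the same $(f_l,\text{GLDF}_k^l)$ as in $G^{\text{OPF}}$, so $\text{GNDF}_k^i$ and $\delta(p_i^d)$ are unchanged for every $i\notin\mathcal{S}$.
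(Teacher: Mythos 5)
Your inductive setup, the choice of invariant, and the boundary conservation identity are all sound, and you have correctly located the crux: the paper's own proof handles the upstream case and the induction on distance from the SCC exactly as you do, and then disposes of the downstream case with a one-line ``following the similar logic,'' which is precisely the step you flag as the main obstacle. The difficulty is that this step does not merely remain to be filled in --- as you have stated it, it fails. Under the mixing rules \eqref{equ:GNDF}--\eqref{equ:GLDF}, the fixed-point shares $\text{GNDF}_k^v$ at distinct boundary buses $v$ of an SCC are in general different, so the external out-edges do \emph{not} all carry the aggregate share $\widehat{\text{GNDF}}_k^{\tilde m}$. A two-bus cycle already shows this: let $\text{SCC}_m=\{a,b\}$ with internal flows $f_{ab}=5$ and $f_{ba}=2$, a generator $k$ at $a$ with $p_k^g=10$, an external inflow of $10$ into $b$ carrying none of generator $k$, and external out-edges $a\to d$ with $f_{ad}=7$ and $b\to e$ with $f_{be}=13$ (all nodal balances hold). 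The fixed point gives $\text{GNDF}_k^a=15/17$ and $\text{GNDF}_k^b=5/17$, so in $G^{\text{OPF}}$ node $d$ sees share $15/17$ and node $e$ sees $5/17$; after contraction both see $10/20=1/2$. Your aggregate identity is verified, $7\cdot\tfrac{15}{17}+13\cdot\tfrac{5}{17}=10=20\cdot\tfrac12$, but the edge-wise claim is not, so strong connectivity cannot ``force the per-generator shares at the boundary to agree'' --- heterogeneous external inflows entering the SCC at different buses keep them apart.

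The upshot is that the proposition is literally true only for nodes with no SCC strictly upstream of them (your invariant then never has to cross a super node), or else one must take the contracted computation as the \emph{definition} of carbon tracing on a cyclic flow graph --- which is in effect what Algorithm \ref{alg:cap} does, since Kahn's procedure never assigns values inside a cycle, and is the reading under which the paper's proof goes through. If you want a provable statement for genuinely downstream nodes, either make that convention explicit, or weaken the conclusion to conservation of total generator-$k$ power (hence total emissions) across the SCC boundary; your summation argument establishes that rigorously, and it is in fact more than the paper's own proof writes down.
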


\begin{proof}
    Proof of the Proposition is straightforward. To show the resulting emission rates of $i \notin \mathcal{S}$ are not affected by SCC transformation, we firstly consider the case where $i$ is an upstream to a neighboring $\text{SCC}_j$, e.g., $i\rightarrow \text{SCC}_j$. After transformation of SCC into super nodes, since there is no change of power or carbon emission flowing into bus $i$, $\text{GLDF}_k^{i, \text{SCC}_j}$ does not change for any $k\in K$. Then emission profile for bus $i$ keeps unchanged. 
    Following the similar logic, we can show emission rate is unchanged when $i$ is an downstream node to a neighboring SCC, $\text{SCC}_j \rightarrow i$.

    For all other nodes not directly connected to $SCC_j$, we can prove by induction that their emission rates also remain unaffected by the SCC transformation, e.g., by showing that the degree-2's neighbor nodes (nodes which are two directed edges away) of $SCC_j$ do not experience any changes in their emission profiles. 
\end{proof}

Once we identify the SCCs, we merge all the nodes in the same SCC into a super node where its power demand is the sum of all the original nodes in the directed cycle. After implementing Tarjan's algorithm, we have the updated graph topology $\tilde{G}^{\text{OPF}}$, which is \emph{acyclic}. 
To prove by contradiction, let us assume $\tilde{G}^{\text{OPF}}$ is cyclic, which means that there exist node $a \rightarrow b$ and node $b \rightarrow a$. The node $a$ corresponds to an SCC set $\mathcal{A}$ in $\tilde{G}^{\text{OPF}}$
and the node $b$ corresponds to an SCC set $\mathcal{B}$ in $\tilde{G}^{\text{OPF}}$. So $\forall$ bus $x \in \mathcal{A}$, and $\forall$ bus $y\in b$, $x \rightarrow y$ and $y \rightarrow x$. This means $\mathcal{A}$ and $\mathcal{B}$ should belong to the same SCC instead of two separate SCCs. 

We note that previous flow-based approaches for emission tracing only discuss the settings where there exist no cycles, or work under the assumption of a linear relationship between power generation and power injections to line flows and nodes~\citep{kang2015carbon, shaocarbon, MISO2025}. Further, previous works need to fully calculate a generator-to-load dispatch ratio matrix and take the pseudoinverse with the assumption that such a matrix is invertible. Our SCC identification step is an order of magnitude more efficient and also matrix-inversion-free. More importantly, the resulting algorithm is compatible with general, nonlinear power flow models and OPF solutions because of the design on cycle flows detection and elimination techniques.





\subsection{Calculating Locational Emission Rates}
\label{sec:emission_rates}
Given the compete \text{GNDF} and \text{GLDF} matrices computed from Algorithm \ref{alg:cap}, carbon emission rates $\gamma_k$ for all generators $k \in {1, ..., K}$, the nodal average emission rate-LAE at bus $i$ can be computed as 
\begin{equation}
\label{equ:nodal_emission}
    \delta(p_i^d)=\sum_{k=1}^{K} \gamma_k \cdot \text{GNDF}_k^i=[\mathbf{GNDF}^i]^T\cdot\bm{\gamma}.
\end{equation}
By collecting the nodal emission rate for all buses, we derive the locational carbon emission rate vector as
\begin{equation}
    \boldsymbol{\delta} = \begin{bmatrix}
                \delta_{1} \\
                \delta_{2} \\
                \vdots \\
                \delta_{N}
            \end{bmatrix}	= \begin{bmatrix}
                \text{GNDF}_1^1 & \dots & \text{GNDF}_K^1 \\
                \vdots &  \ddots   & \vdots \\       
                \text{GNDF}_1^N & \dots & \text{GNDF}_K^N
            \end{bmatrix}
            \begin{bmatrix}
                \gamma_{1} \\
                \gamma_{2} \\
                \vdots \\
                \gamma_{K}
            \end{bmatrix} .
\end{equation}

Our algorithm can be also easily extended to calculate LMEs. To achieve that, denote $\mathbf{\hat{p}}^g$ as the generation dispatch vector corresponding to the locally perturbed load $\hat{p}_i^d$, which can be acquired by resolving the OPF \eqref{equ:opf} with the perturbed load. Then the LMEs at node $i$ can be calculated as 

\begin{equation}
    \mu(p_i^d)=\frac{\sum_{k=1}^{K} (\gamma_k \cdot \hat{p}_k^g- \gamma_k \cdot p_k^g)  }{\hat{p}_i^d- p_i^d}.
\end{equation}

\subsection{Time Complexity}
For our model, $|\mathcal{N}|$ is the number of buses and $|\mathcal{L}|$ is the number of transmission lines.  
The time complexity of finding SCCs is $O(|\mathcal{N}|+|\mathcal{L}|)$, because it performs a DFS that visits every node of the graph exactly once, and it does not require to revisit any node that has already been visited. For the topological sorting, Kahn's algorithm uses stacks to keep track of nodes. Stack operations (push and pop) take $O(1)$ time per operation. Each node is pushed onto the stack once and popped once. Therefore, the time complexity of Kahn's algorithm is linear in the number of nodes $(|\mathcal{N}|)$ and edges $(|\mathcal{L}|)$ in the graph: $O(|\mathcal{N}|+|\mathcal{L}|)$.

The time complexity of topologically sorting the matrices is also $O(|\mathcal{N}|+|\mathcal{L}|)$, and the time complexity of finding and attributing SCCs in graph is $O(|\mathcal{N}|+|\mathcal{L}|)$. So the overall time complexity of our whole algorithm remains $O(|\mathcal{N}|+|\mathcal{L}|)$. This time complexity is a significant improvement over the depth-first tree search based method applied in previous research~\citep{chen2024contributions}, which is $O(|\mathcal{N}||\mathcal{L}|)$. While for methods relying on matrix-inversion and linearized power flow~\citep{chen2024carbon, kang2015carbon, lu2024market}, the complexity is in the order of $O(|\mathcal{N}|^3+|\mathcal{N}||\mathcal{L}|)$.

We also time the running process of algorithm in IEEE systems for carbon emission calculation. The average runtime of our carbon tracing algorithm in an IEEE 6-bus case study is 0.0269 seconds per instance. And the average runtime in IEEE 30-bus study is 0.07674 seconds per instance, adding very mild computations compared to realistic OPF solving process.

\section{California Grid Case Study Setup}
In this section, we detail the setup on testing CAISO network's nodal emission. To our knowledge, this is one of the first systematic nodal-level carbon emission studies working with real-world data and realistic transmission topology. Implications brought by this work can strengthen the understanding of fine-grained grid emissions, which provides necessary tools and simulation testbeds for carbon emissions.

\subsection{Network Topology}
In order to test the accuracy and efficiency of our carbon tracing algorithm, we utilize the California Test System (CATS)  \cite{CATS2024}, which is a geographically-accurate, synthetic electric grid model without CEII-protected information. The model provides geographic coordinates of the loads and generators, transmission line and transformer parameters, and manually curated grid topology. 
Fig. \ref{fig:large}(a) visualizes the CATS model topology, where  transmission line thickness represents voltage level, and the size of the blue dots represents bus load demand. The test power network has 8,870 buses, 2,149 generators, and 10,162 transmission lines. The CATS system is with about 73 GW of existing generation capacity, and 1 GWh of existing energy storage capacity.  Compared to IEEE test cases usually with limited number of buses and synthetic load data, due to the nature of CAISO's large-scale network and CATS realistic loading conditions, benchmarking on this model can provide a comprehensive evaluation of our carbon tracing algorithm.\vspace{-10pt} 

\subsection{Demand, Generation, and Emission Data}
\paragraph{Nodal Demand Data Estimation} To get the real-time power demand data, we query the CAISO API~\cite{CAISO_demand} to collect and process the 2019 whole-year data. The original data is in 5 minutes resolution, and we down-sample to hourly demand data for the carbon rate computation. Since CAISO's published data only include system-level demand, we make the assumption that the load demand distribution remains relatively static. We follow a similar load assignment strategy as the method developed in \cite{CATS2024}, where the distribution of demand across nodes is determined by historical consumption patterns and demographic factors. Based on CATS's estimated nodal load vector, we scale the nodal demand at each hour to match the {hourly} aggregated CAISO system demand data.

\paragraph{Nodal Renewable Generation Estimation} In both CATS and CAISO systems, there are no regional or nodal renewable generation values available. To implement our carbon tracing algorithm, we follow the steps described in Appendix \ref{sec:App_renewable} to distribute the total CAISO system-level renewable generation to each node based on an optimization-based procedure. The carbon emission rate for renewable generation (i.e., $\gamma_k$) is set to be zero in the case study. 

\paragraph{Imported Power and Emission Estimation} Since neither CATS model nor CAISO reveals the import energy price, the power generation imported from other states does not have an estimated cost function which can be applied to solve OPF. To address that, we propose to allocate the total import value published by CAISO to each state-bordering node in CATS, and make sure nodal power balance is satisfied after such separation. Once we treat the bordering buses as virtual power generation node with generation equal to import power, we do an estimation for import power's emission rate. As the total emissions from import can be calculated by subtracting emissions from other generation sources. We then fit the import's emission rate based on historical $\{{p}^{\text{import}}, e^{\text{import}}\}$ data pairs, where ${p}^{\text{import}}$ is the imported power (in MW) and $e^{\text{import}}$ is the imported emission (in metric ton of CO2).

\paragraph{Emission Rate Data} 

Table \ref{table:emission_rate} in Appendix shows the generator emission rate $\gamma_k$ data we used for different types of generators~\cite{EIA1}. Furthermore, 
CAISO API~\cite{CAISO_carbon} provides the hourly total emission by each energy source in California at 5 minutes resolution. We treat these aggregated emission values as a reference to validate the locational emission rates and emission per energy resources estimated from our algorithm.  


All of our simulations are implemented on a laptop with an Intel® Core™ Ultra 5 Processor 125H CPU. We find that all the optimal power flow problems based on the CATS topology, system and generator parameters, and our nodal demand estimation are feasible. For each simulation, we record the solved power generation and line flows and use these values to compute the nodal and total emissions to evaluate our method. During the year-long simulations, the average solution time for DCOPF is 2 minutes and 23 seconds, with a standard deviation of 9.51 seconds. ACOPF takes 4 hours and 26 minutes to solve, with a standard deviation of 1 hour and 13 minutes. Throughout the simulations, we find that DCOPF’s solution for generation dispatch is within $3\%$ of ACOPF, suggesting that DCOPF provides a good approximation of the full ACOPF in our simulated cases.

\section{Case Study Results}
\subsection{Validation of Model Fidelity}
To ensure our nodal emission calculations are accurate, we validate our OPF-based power generation results by summing each generator's output by fuel type and comparing them with CAISO’s published data; a randomly selected day’s results are shown in Fig. \ref{fig:generation_mix}. This procedure ensures that our estimated power dispatch $\mathbf{p}^g$ from OPF solutions using CATS model reflects actual system dispatch decisions, so our carbon tracing method operates on legitimate power flows $\mathbf{p}_l$. 
Note that power generation from renewable energy sources (Solar, Wind, Geothermal, Biomass, Biogas, and Hydro) is taken directly from ground-truth CAISO data rather than solved through our OPF model. This is because the bidding strategies of these sources are not explicitly available from CAISO or the CATS model, so we do not incorporate a cost vector for them. Moreover, these are stochastic generation sources, and predicting or estimating their output is beyond the scope of this work. Future power grid emission studies may explore how factors such as weather and geolocation affect short-term availabilities of renewable and hydroelectric generation.
\begin{figure}[!htb]
\centering
{\includegraphics[width=0.99\linewidth]{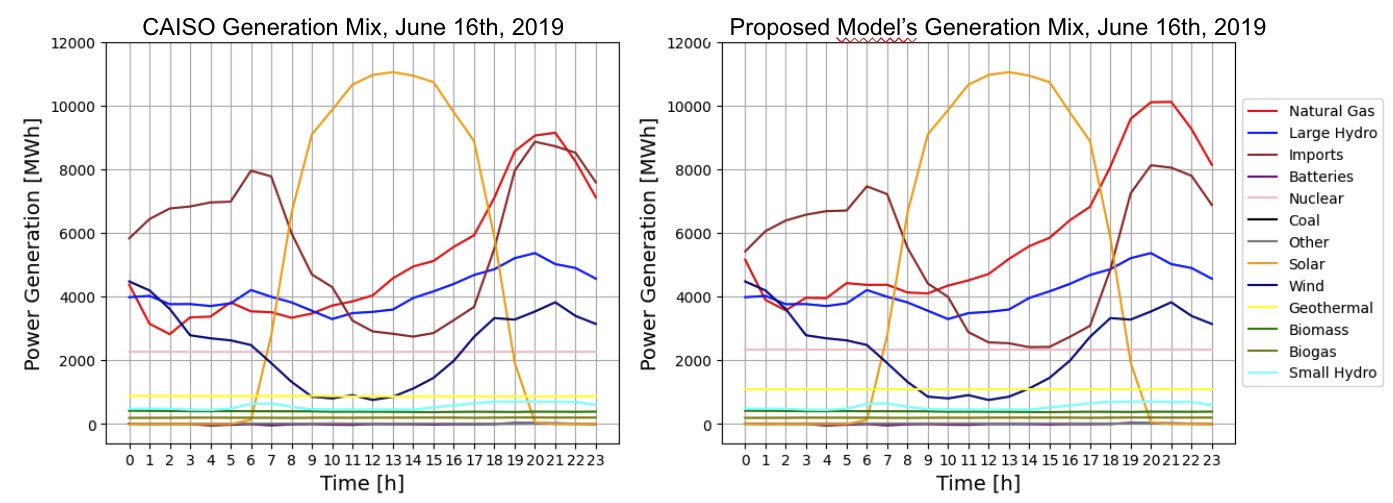}}
\caption{Power generation comparison by fuel type.}
\label{fig:generation_mix}
\end{figure}

\begin{figure}[!htb]
\centering
{\includegraphics[width=0.99\linewidth]{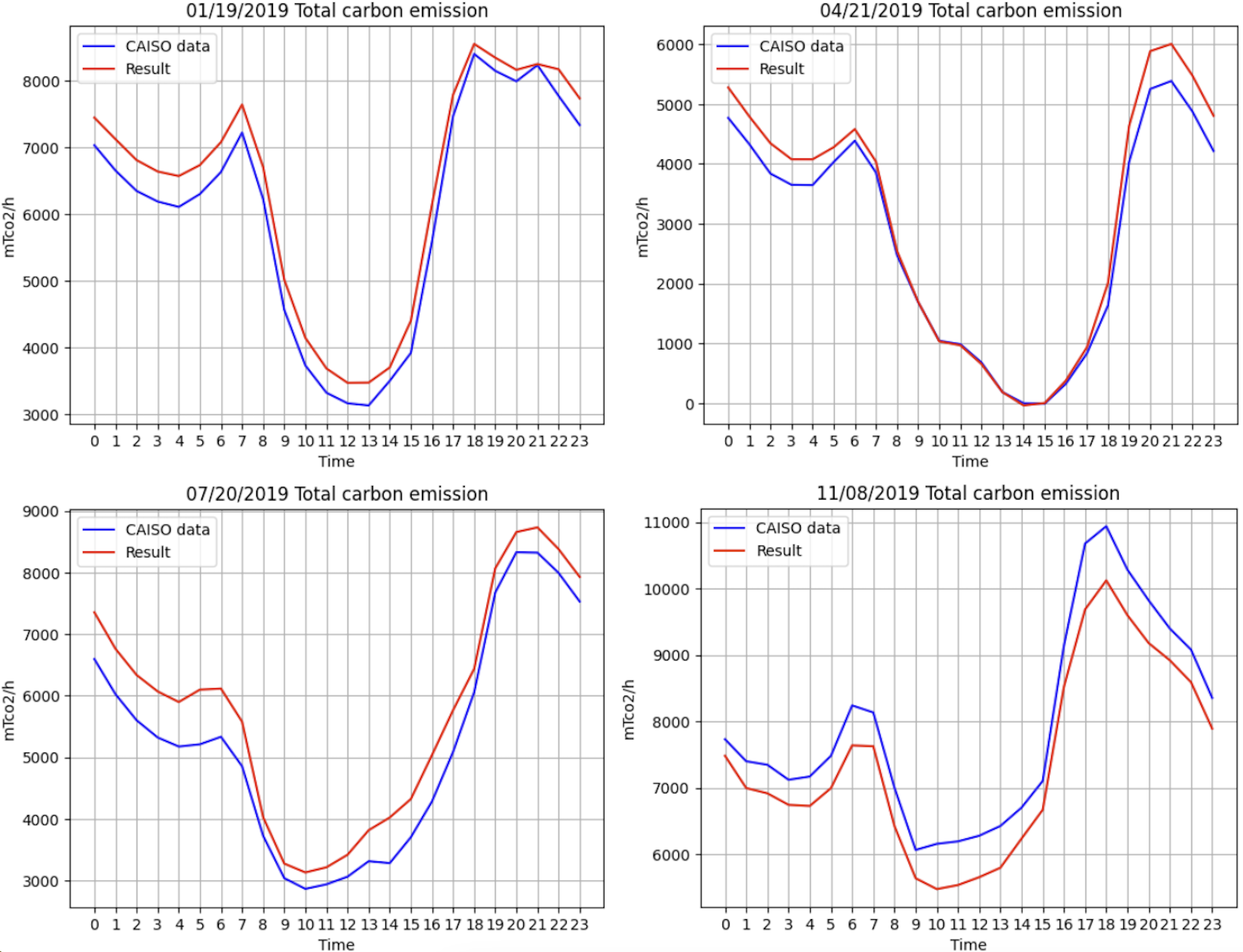}}
\caption{Validation of system-wide aggregate carbon emission for four randomly selected days across four seasons.}
\label{fig:time_series_carbon}
\end{figure}

Once we derive the GNDF matrix from our carbon tracing algorithm, at each timestep $t$ we can also estimate the system-level total emissions $E$ in metric tonnes of CO$_2$ as 
\begin{equation}
    E(t)=\sum \delta(p_i^d(t))\cdot p_i^d(t)= \sum_i^N ([\mathbf{GNDF}^i]^T\cdot \bm{\gamma} )\cdot p_i^d (t).
\end{equation}

The daily timeseries comparisons of our estimated $E(t)$ versus CAISO's reported system-level emissions are shown in Fig. \ref{fig:time_series_carbon}, for four randomly selected days in different seasons. We observe that our estimated $E(t)$—based on our proposed nodal power demand estimation and nodal emission rates computation method—closely tracks the CAISO published data on a daily basis. These results indicate that our proposed method can effectively model carbon emission rates first at the nodal level and then aggregate them at the system level.

The Mean absolute percentage error (MAPE) and weighted Mean absolute percentage error (wMAPE) are used to measure the accuracy of our total power generation and total carbon emission results. The wMAPE is averaged upon each data entry's relative error rate. In Table \ref{table-fourseason}, we record monthly emission rate and error rate compared to CAISO's published values for four typical month. Winter months generally experience higher emission rate due to scarcity of solar generation in the winter. Across all four seasons, our calculated emission rate are within $12.2\%$, indicating the proposed approach can closely track both the generation dispatch and emission patterns. Details of the evaluation metrics and additional validation results across yearly data are described in Appendix. 

\begin{table}
\caption{Average carbon emission rate (metric tonnes of CO$_2$/MWh) based on the estimated total system-level emission v.s. CAISO published values.}
\label{table-fourseason}
\begin{tabularx}{0.49\textwidth} {b|ssss}
\Xhline{2\arrayrulewidth}
Months  
& Mar & Jun & Sep & Dec  \\ 
\midrule
Calculated Rate & 0.211 &0.2104 &0.2494 &0.2875\\
CAISO Rate & 0.2255& 0.2189 &0.2862 &0.318\\
MAPE (\%)   & 9.86& 2.44&12.17&11.25  \\ 
wMAPE (\%)   & 8.61 &2.28&11.66&10.78 \\
\bottomrule
\end{tabularx}
\end{table}

\subsection{Temporal and Spatial Variations of Emission Rates}
\begin{figure}[htb]
\centering
{\includegraphics[width=0.99\linewidth]{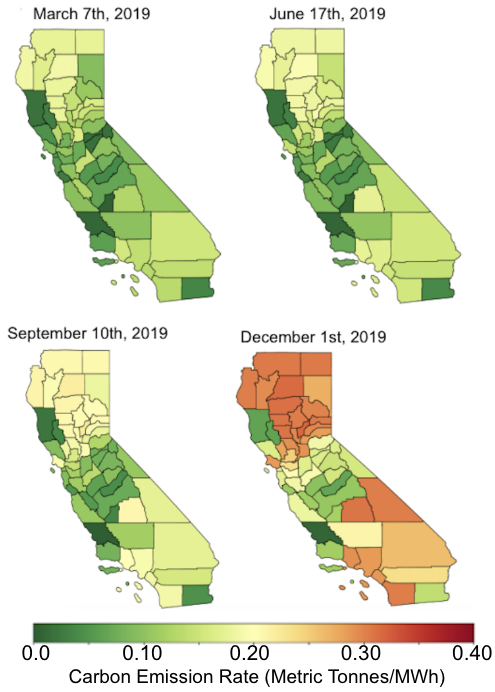}}
\caption{Geographical illustration of locational average emission (LAE) rates at 11 am for four seasons, with one day randomly selected from each season.\vspace{-10pt}}
\label{fig:timeseries}
\end{figure}
In Fig. \ref{fig:timeseries}, we visualize the carbon emission rates at the county level over four different seasons. By randomly selecting one date from each season, we observe that the winter season is characterized by the most geographical variation in emission rates, while inland counties generally possess higher locational emission rates. This is attributed to the industrial activities concentrated in these areas, along with lower access to cleaner energy resources. Conversely, coastal regions, benefiting from a greater influx of renewable energy sources such as wind and solar, exhibit comparatively lower emissions.

\begin{figure}[tb]
\centering
{\includegraphics[width=0.99\linewidth]{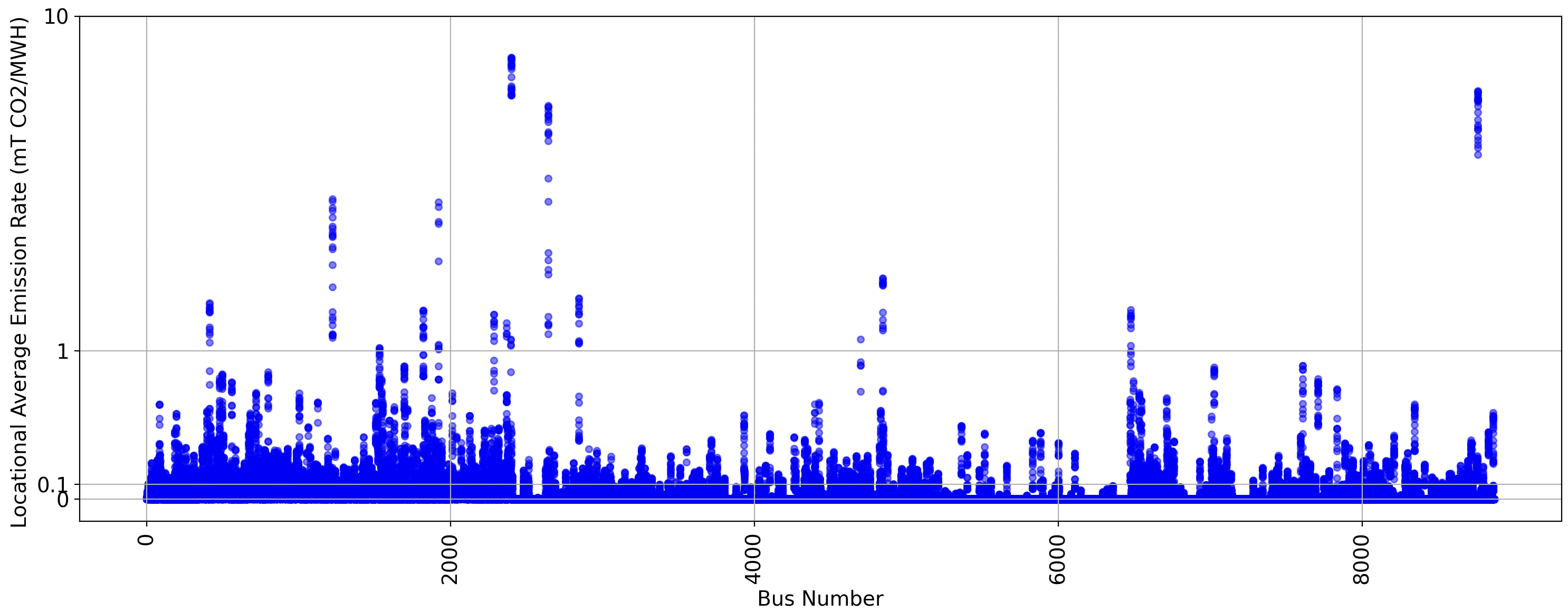}
}
\caption{Distribution of 24-hour's LAE across 8,870 nodes in the simulated CAISO network on June 17th, 2019.\vspace{-10pt}}
\label{fig:nodal_distribution}
\end{figure}

It is noteworthy that San Luis Obispo County consistently has the lowest emission rate across all seasons, as the state's largest single power station—the Diablo Canyon nuclear power plant, with a nameplate capacity of 2,256 MW—is continuously generating carbon-free electricity. Thus, our carbon flow tracing algorithm always analyzes the outflow of low-carbon electricity from this county, marked in dark green in Fig. \ref{fig:timeseries}. Since the Diablo Canyon power plant is scheduled to retire in a few years, California's geographical distribution of emission rates may change significantly. Fig. \ref{fig:nodal_distribution} illustrates how the locational average emission rate vary across all nodes in the 24 hours of June 17th. Each column in the figure represents the occurrences of certain emission rate at a specific bus.  There are few outliers with much higher emission rates ($>0.5$ metric tonnes CO$_2$/MWh), indicating the very infrequent dispatch by coal-fired generators in the CAISO network. Notably, regions with higher renewable energy generation often coincide with lower emissions, indicating the local effects of integrating sustainable practices. Coastal regions typically have lower LAE rates due to the more diverse energy mix.

\begin{figure}[tb]
\centering
{\includegraphics[width=0.99\linewidth]{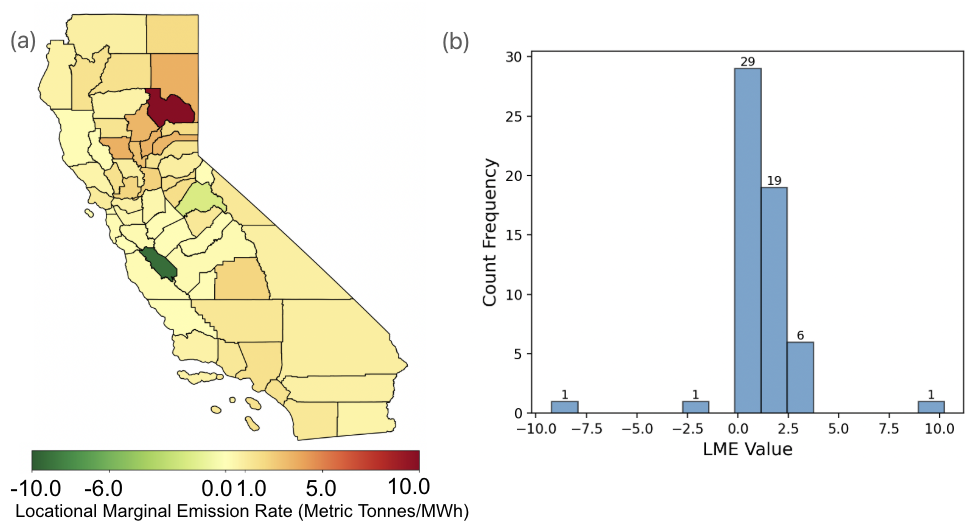}
}
\caption{(a) Geographical illustration of county-level LME; (b) Distribution of LME rates for all counties in California. Results are shown for 11am, June 17th, 2019.\vspace{-20pt}}
\label{fig:LME}
\end{figure}

In Fig. \ref{fig:LME} we illustrate the geographical variations of LME at 11am on June 17th, 2019. The colorbar is coded based on double-sided logscale axis. We can observe the LME rates vary significantly across the state. Plumas County in Northeastern California records the highest LME rate of $10.21 $ metric tonnes CO$_2$/MWh. This may be due to the relatively low power consumption with small population of $18,807$ in Plumas County, so that the power dispatch is varying a lot depending on loading conditions. LME can also drop below 0, as increasing the total load may lead to more renewables being dispatched, causing the negative marginal emission rate. Such a spatial heterogeneity also underscores the importance of energy management strategies considering distributed energy resources, and provide localized insights for emission reduction initiatives. More results are in Appendix \ref{sec:more_results}.



\section{Conclusion and Discussion}
In this work, we propose a systematic framework for computing locational carbon emission rates through power flow analysis. Our method supports full ACOPF formulations, enabled by cycle flow identification and elimination techniques. The algorithm achieves high computational efficiency, making it suitable for large-scale, realistic power system applications. We apply this framework to perform an ISO-scale, hourly emission analysis on the California grid. 
Results underscore the need for geographically informed policy and infrastructure planning to support effective and equitable decarbonization.

Future work can incorporate demand forecasts and generation prediction for emission estimation. 
There are instances where the network model might not be available to the interested party. An intriguing research area involves understanding how to estimate the topology, parameters, and emission profiles of the electricity network using publicly available data.

\bibliographystyle{plainnat}
\bibliography{references}

\begin{thebibliography}{46}
\providecommand{\natexlab}[1]{#1}
\providecommand{\url}[1]{\texttt{#1}}
\expandafter\ifx\csname urlstyle\endcsname\relax
  \providecommand{\doi}[1]{doi: #1}\else
  \providecommand{\doi}{doi: \begingroup \urlstyle{rm}\Url}\fi

\bibitem[CAISO({\natexlab{a}})]{CAISO_carbon}
CAISO.
\newblock Emission data api.
\newblock online, {\natexlab{a}}.
\newblock URL \url{https://www.caiso.com/todays-outlook/emissions}.

\bibitem[CAISO({\natexlab{b}})]{CAISO_demand}
CAISO.
\newblock System demand, in megawatts, in 5-minute increments.
\newblock online, {\natexlab{b}}.
\newblock URL \url{https://www.caiso.com/todays-outlook}.

\bibitem[CAISO({\natexlab{c}})]{caisoemission}
CAISO.
\newblock Greenhouse gas emissions tracking reports.
\newblock \url{https://www.caiso.com/library/greenhouse-gas-emissions-tracking-reports}, {\natexlab{c}}.
\newblock Accessed: 2025-05-14.

\bibitem[{California Energy Commission}(2024{\natexlab{a}})]{CEC1}
{California Energy Commission}.
\newblock {Utility Solar Generation and Capacity by Type and County Table: 2019}, 2024{\natexlab{a}}.
\newblock URL \url{https://gis.data.ca.gov/documents/bc8b257bec624fb88dd24192195b135a/explore}.
\newblock Accessed: {2024-08-14}.

\bibitem[{California Energy Commission}(2024{\natexlab{b}})]{CEC2}
{California Energy Commission}.
\newblock {Utility Renewable Generation by End Use and County: 2020}, 2024{\natexlab{b}}.
\newblock URL \url{https://gis.data.ca.gov/documents/ffc1f4e7787f4be8a1b142c2906817ef/explore}.
\newblock Accessed: {2024-08-14}.

\bibitem[Chen et~al.(2024{\natexlab{a}})Chen, Sun, Shi, and Li]{chen2024carbon}
Xin Chen, Andy Sun, Wenbo Shi, and Na~Li.
\newblock Carbon-aware optimal power flow.
\newblock \emph{IEEE Transactions on Power Systems}, 2024{\natexlab{a}}.

\bibitem[Chen et~al.(2022)Chen, Zhang, and Zhang]{chen2022learning}
Yize Chen, Ling Zhang, and Baosen Zhang.
\newblock Learning to solve dcopf: A duality approach.
\newblock \emph{Electric Power Systems Research}, 213:\penalty0 108595, 2022.

\bibitem[Chen et~al.(2024{\natexlab{b}})Chen, Deka, and Shi]{chen2024contributions}
Yize Chen, Deepjyoti Deka, and Yuanyuan Shi.
\newblock Contributions of individual generators to nodal carbon emissions.
\newblock In \emph{Proceedings of the 15th ACM International Conference on Future and Sustainable Energy Systems}, 2024{\natexlab{b}}.

\bibitem[Cheng et~al.(2022)Cheng, Bian, Shi, and Chen]{cheng2022carbon}
Kai-Wen Cheng, Yuexin Bian, Yuanyuan Shi, and Yize Chen.
\newblock Carbon-aware ev charging.
\newblock In \emph{2022 IEEE International Conference on Communications, Control, and Computing Technologies for Smart Grids (SmartGridComm)}. IEEE, 2022.

\bibitem[Cheng et~al.(2019)Cheng, Zhang, Zhang, Kang, Xi, and Feng]{cheng2019low}
Yaohua Cheng, Ning Zhang, Baosen Zhang, Chongqing Kang, Weimin Xi, and Mengshuang Feng.
\newblock Low-carbon operation of multiple energy systems based on energy-carbon integrated prices.
\newblock \emph{IEEE Transactions on Smart Grid}, 11\penalty0 (2):\penalty0 1307--1318, 2019.

\bibitem[Cho et~al.(2025)Cho, Ko, and Zhu]{cho2025pglib}
Young-ho Cho, Min-Seung Ko, and Hao Zhu.
\newblock Pglib-co2: A power grid library for computing and optimizing carbon emissions.
\newblock \emph{arXiv preprint arXiv:2506.14662}, 2025.

\bibitem[Corradi(2021)]{electricitymaps}
Olivier Corradi.
\newblock {How to trace back the origin of electricity}.
\newblock Electricity Maps Blog, 2021.
\newblock URL \url{https://ww2.electricitymaps.com/blog/flow-tracing}.
\newblock Published: {2021-04-19}.

\bibitem[de~Chalendar et~al.(2019)de~Chalendar, Taggart, and Benson]{de2019tracking}
Jacques~A de~Chalendar, John Taggart, and Sally~M Benson.
\newblock Tracking emissions in the us electricity system.
\newblock \emph{Proceedings of the National Academy of Sciences}, 116\penalty0 (51):\penalty0 25497--25502, 2019.

\bibitem[{Environmental Protection Agency}(2024)]{EPA1}
{Environmental Protection Agency}.
\newblock {Historical eGRID Data}, 2024.
\newblock URL \url{https://www.epa.gov/egrid/historical-egrid-data}.
\newblock Accessed: {2024-08-11}.

\bibitem[Gorka et~al.(2024)Gorka, Rhodes, and Roald]{gorka2024electricityemissions}
Joe Gorka, Noah Rhodes, and Line Roald.
\newblock Electricityemissions. jl: A framework for the comparison of carbon intensity signals.
\newblock \emph{arXiv preprint arXiv:2411.06560}, 2024.

\bibitem[Gorka et~al.(2025)Gorka, Rhodes, and Roald]{gorka2025electricityemissions}
Joseph Gorka, Noah Rhodes, and Line Roald.
\newblock Electricityemissions. jl: A framework for the comparison of carbon intensity signals.
\newblock In \emph{Proceedings of the 16th ACM International Conference on Future and Sustainable Energy Systems}, pages 19--30, 2025.

\bibitem[Hulshof et~al.(2019)Hulshof, Jepma, and Mulder]{hulshof2019performance}
Daan Hulshof, Catrinus Jepma, and Machiel Mulder.
\newblock Performance of markets for european renewable energy certificates.
\newblock \emph{Energy Policy}, 128:\penalty0 697--710, 2019.

\bibitem[Kahn(1962)]{kahn1962topological}
Arthur~B Kahn.
\newblock Topological sorting of large networks.
\newblock \emph{Communications of the ACM}, 5\penalty0 (11):\penalty0 558--562, 1962.

\bibitem[Kang et~al.(2015)Kang, Zhou, Chen, Wang, Sun, Xia, and Yan]{kang2015carbon}
Chongqing Kang, Tianrui Zhou, Qixin Chen, Jianhui Wang, Yanlong Sun, Qing Xia, and Huaguang Yan.
\newblock Carbon emission flow from generation to demand: A network-based model.
\newblock \emph{IEEE Transactions on Smart Grid}, 6\penalty0 (5), 2015.

\bibitem[Kirschen et~al.(1997)Kirschen, Allan, and Strbac]{kirschen1997contributions}
Daniel Kirschen, Ron Allan, and Goran Strbac.
\newblock Contributions of individual generators to loads and flows.
\newblock \emph{IEEE Transactions on power systems}, 12\penalty0 (1):\penalty0 52--60, 1997.

\bibitem[Li and Lin(2013)]{li2013comparing}
Aijun Li and Boqiang Lin.
\newblock Comparing climate policies to reduce carbon emissions in china.
\newblock \emph{Energy Policy}, 60:\penalty0 667--674, 2013.

\bibitem[Li et~al.(2022)Li, Gao, Abdulla, Shan, and Gao]{li2022combined}
Mingquan Li, Huiwen Gao, Ahmed Abdulla, Rui Shan, and Shuo Gao.
\newblock Combined effects of carbon pricing and power market reform on co2 emissions reduction in china's electricity sector.
\newblock \emph{Energy}, 257:\penalty0 124739, 2022.

\bibitem[Lindberg et~al.(2021)Lindberg, Abdennadher, Chen, Lesieutre, and Roald]{lindberg2021guide}
Julia Lindberg, Yasmine Abdennadher, Jiaqi Chen, Bernard~C Lesieutre, and Line Roald.
\newblock A guide to reducing carbon emissions through data center geographical load shifting.
\newblock In \emph{Proceedings of the Twelfth ACM International Conference on Future Energy Systems}, pages 430--436, 2021.

\bibitem[Ling et~al.(2024)Ling, Yang, Wang, Bartocci, Jiang, Xu, and Wang]{ling2024comprehensive}
Chen Ling, Qing Yang, Qingrui Wang, Pietro Bartocci, Lei Jiang, Zishuo Xu, and Luyao Wang.
\newblock A comprehensive consumption-based carbon accounting framework for power system towards low-carbon transition.
\newblock \emph{Renewable and Sustainable Energy Reviews}, 206:\penalty0 114866, 2024.

\bibitem[Lu et~al.(2024)Lu, Yan, Wang, Kang, Shahidehpour, and Li]{lu2024market}
Zelong Lu, Lei Yan, Jianxue Wang, Chongqing Kang, Mohammad Shahidehpour, and Zuyi Li.
\newblock A market-clearing-based sensitivity model for locational marginal and average carbon emission.
\newblock \emph{IEEE Transactions on Energy Markets, Policy and Regulation}, 2\penalty0 (4):\penalty0 579--582, 2024.

\bibitem[Maji et~al.(2022)Maji, Shenoy, and Sitaraman]{maji2022carboncast}
Diptyaroop Maji, Prashant Shenoy, and Ramesh~K Sitaraman.
\newblock Carboncast: multi-day forecasting of grid carbon intensity.
\newblock In \emph{Proceedings of the 9th ACM International Conference on Systems for Energy-Efficient Buildings, Cities, and Transportation}, pages 198--207, 2022.

\bibitem[Mayes et~al.(2024)Mayes, Klein, and Sanders]{mayes2024using}
Stepp Mayes, Nicholas Klein, and Kelly~T Sanders.
\newblock Using neural networks to forecast marginal emissions factors: A caiso case study.
\newblock \emph{Journal of Cleaner Production}, 434:\penalty0 139895, 2024.

\bibitem[MISO(2025)]{MISO2025}
MISO.
\newblock Consumed emissions estimates technical documentation.
\newblock online, 2025.
\newblock URL \url{https://miso.singularity.energy/CarbonFlow_documentation.954093.pdf}.
\newblock Accessed: {2025–05-31}.

\bibitem[Newsom(2022)]{CAL_Carbon2050}
Governor~Gavin Newsom.
\newblock California releases world’s first plan to achieve net zero carbon pollution.
\newblock online, 2022.
\newblock URL \url{https://www.gov.ca.gov/2022/11/16/california-releases-worlds-first-plan-to-achieve-net-zero-carbon-pollution/}.

\bibitem[Papadis and Tsatsaronis(2020)]{papadis2020challenges}
Elisa Papadis and George Tsatsaronis.
\newblock Challenges in the decarbonization of the energy sector.
\newblock \emph{Energy}, 205:\penalty0 118025, 2020.

\bibitem[Penrod(2024)]{utilityarticle}
Emma Penrod.
\newblock How can more companies use 24/7 clean energy? hourly matching is ‘not the hard part,’ analysts say.
\newblock online, 2024.
\newblock URL \url{https://www.utilitydive.com/news/hourly-matching-clean-energy-decarbonization-wind-solar-renewables-google-constellation-pjm/717075/}.
\newblock Accessed: {2024–07-31}.

\bibitem[{PJM Inside Lines}(2021)]{PJM}
{PJM Inside Lines}.
\newblock {Marginal Emission Rates Added to Data Miner Tool}, 2021.
\newblock URL \url{https://insidelines.pjm.com/marginal-emission-rates-added-to-data-miner-tool/}.
\newblock Accessed: {2025-05-14}.

\bibitem[Protocol()]{GHG2025}
Greenhouse~Gas Protocol.
\newblock Scope 2 standard advances.
\newblock \url{https://ghgprotocol.org/blog/scope-2-standard-advances-isb-approves-consultation-market-and-location-based-revisions}.
\newblock Accessed: 2025-09-03.

\bibitem[Radovanovi{\'c} et~al.(2022)Radovanovi{\'c}, Koningstein, Schneider, Chen, Duarte, Roy, Xiao, Haridasan, Hung, Care, et~al.]{radovanovic2022carbon}
Ana Radovanovi{\'c}, Ross Koningstein, Ian Schneider, Bokan Chen, Alexandre Duarte, Binz Roy, Diyue Xiao, Maya Haridasan, Patrick Hung, Nick Care, et~al.
\newblock Carbon-aware computing for datacenters.
\newblock \emph{IEEE Transactions on Power Systems}, 38\penalty0 (2):\penalty0 1270--1280, 2022.

\bibitem[Rockstr{\"o}m et~al.(2017)Rockstr{\"o}m, Gaffney, Rogelj, Meinshausen, Nakicenovic, and Schellnhuber]{rockstrom2017roadmap}
Johan Rockstr{\"o}m, Owen Gaffney, Joeri Rogelj, Malte Meinshausen, Nebojsa Nakicenovic, and Hans~Joachim Schellnhuber.
\newblock A roadmap for rapid decarbonization.
\newblock \emph{Science}, 355\penalty0 (6331):\penalty0 1269--1271, 2017.

\bibitem[Shao and Yu(2025)]{shaocarbon}
Zhentong Shao and Nanpeng Yu.
\newblock Carbon-aware optimal power flow with data-driven carbon emission tracing.
\newblock 2025.

\bibitem[Shen et~al.(2025)Shen, Shi, Kirschen, and Chen]{shen2025carbon}
Yuqing Shen, Yuanyuan Shi, Daniel Kirschen, and Yize Chen.
\newblock Carbon emission flow tracing: Fast algorithm and california grid study.
\newblock \emph{arXiv preprint arXiv:2507.18077}, 2025.

\bibitem[Sustainability()]{Google_Carbon2030}
Google Sustainability.
\newblock Google aims at achieving 24/7 carbon-free energy (cfe) across its operations by 2030.
\newblock online.
\newblock URL \url{https://sustainability.google/operations/}.

\bibitem[Tarjan(1972)]{tarjan1972depth}
Robert Tarjan.
\newblock Depth-first search and linear graph algorithms.
\newblock \emph{SIAM journal on computing}, 1\penalty0 (2):\penalty0 146--160, 1972.

\bibitem[Taylor et~al.(2024)Taylor, Rangarajan, Rhodes, Snodgrass, Lesieutre, and Roald]{CATS2024}
Sofia Taylor, Aditya Rangarajan, Noah Rhodes, Jonathan Snodgrass, Bernard~C. Lesieutre, and Line~A. Roald.
\newblock California test system (cats): A geographically accurate test system based on the california grid.
\newblock \emph{IEEE Transactions on Energy Markets, Policy and Regulation}, 2\penalty0 (1):\penalty0 107--118, 2024.
\newblock \doi{10.1109/TEMPR.2023.3338568}.

\bibitem[Tranberg et~al.(2019)Tranberg, Corradi, Lajoie, Gibon, Staffell, and Andresen]{tranberg2019real}
Bo~Tranberg, Olivier Corradi, Bruno Lajoie, Thomas Gibon, Iain Staffell, and Gorm~Bruun Andresen.
\newblock Real-time carbon accounting method for the european electricity markets.
\newblock \emph{Energy Strategy Reviews}, 26:\penalty0 100367, 2019.

\bibitem[{US Energy Information Administration}(2024)]{EIA1}
{US Energy Information Administration}.
\newblock {Emissions by plant and by region}, 2024.
\newblock URL \url{https://www.eia.gov/electricity/data/emissions/}.
\newblock Accessed: {2024-08-11}.

\bibitem[Valenzuela et~al.(2023)Valenzuela, Degleris, El~Gamal, Pavone, and Rajagopal]{valenzuela2023dynamic}
Lucas~Fuentes Valenzuela, Anthony Degleris, Abbas El~Gamal, Marco Pavone, and Ram Rajagopal.
\newblock Dynamic locational marginal emissions via implicit differentiation.
\newblock \emph{IEEE Transactions on Power Systems}, 39\penalty0 (1), 2023.

\bibitem[Wyatt(2024)]{carbonpresentation}
Mark Wyatt.
\newblock Stakeholder workshop: Emissions estimates.
\newblock online, 2024.
\newblock URL \url{https://cdn.misoenergy.org/20240718%20MISO%20Emissions%20Estimates%20Workshop%20Item%2002%20Presentation639380.pdf}.
\newblock Accessed: {2024–07-31}.

\bibitem[Xie et~al.(2024)Xie, Li, and Poor]{xie2024sustainable}
Le~Xie, Na~Li, and H~Vincent Poor.
\newblock Sustainable electrification in the era of ai.
\newblock \emph{Nature Reviews Electrical Engineering}, 1\penalty0 (8):\penalty0 493--494, 2024.

\bibitem[Zhang et~al.(2025)Zhang, Zhu, Cheng, Shao, Yu, and Jiang]{zhang2025review}
Xiaoqin Zhang, Hongbin Zhu, Ziying Cheng, Jian Shao, Xiang Yu, and Jun Jiang.
\newblock A review of carbon emissions accounting and prediction on the power grid.
\newblock \emph{Electrical Engineering}, pages 1--14, 2025.

\end{thebibliography}

\beginappendix

\appendix
\section{Proof of Proposition 1}
\label{sec:App_Proof}
\begin{proposition}
    Sets of SCCs do not intersect each other.
\end{proposition}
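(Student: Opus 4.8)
The plan is to prove this by contradiction, leveraging the standard fact that \emph{mutual reachability} is an equivalence relation on the node set, so that the SCCs of $G^{\text{OPF}}$ are precisely its equivalence classes and therefore form a partition of $\mathcal{N}$. Concretely, suppose toward a contradiction that $\text{SCC}_i \neq \text{SCC}_j$ yet $\text{SCC}_i \cap \text{SCC}_j \neq \emptyset$, and fix a node $v \in \text{SCC}_i \cap \text{SCC}_j$.

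First I would record the reachability facts that membership supplies: for every $x \in \text{SCC}_i$ we have $x \rightarrow v$ and $v \rightarrow x$, and for every $y \in \text{SCC}_j$ we have $v \rightarrow y$ and $y \rightarrow v$. Next I would use transitivity of reachability (concatenation of directed walks through $v$) to obtain $x \rightarrow y$ from $x \rightarrow v \rightarrow y$, and symmetrically $y \rightarrow x$ from $y \rightarrow v \rightarrow x$. Since $x$ and $y$ were arbitrary, every pair of nodes drawn from $\text{SCC}_i \cup \text{SCC}_j$ is mutually reachable, so this union is itself a strongly connected vertex set. Invoking maximality of an SCC, the strongly connected set $\text{SCC}_i \cup \text{SCC}_j$ that contains $\text{SCC}_i$ must equal $\text{SCC}_i$, and likewise must equal $\text{SCC}_j$; hence $\text{SCC}_i = \text{SCC}_j$, contradicting the assumption. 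Therefore distinct SCCs are disjoint. Equivalently, one may phrase the whole argument at once: mutual reachability is reflexive via the trivial zero-length walk, symmetric directly from the definition $i \rightarrow j,\ j \rightarrow i$, and transitive by walk concatenation, hence an equivalence relation whose classes are exactly the maximal strongly connected sets, which by definition of a partition are pairwise disjoint.

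The argument is almost entirely definitional, so I do not anticipate a substantive obstacle; the only point that needs a line of care — and the one I would flag as the ``hard part,'' though it is minor — is pinning down the reachability convention on $G^{\text{OPF}}$, namely that a node is reachable from itself via the empty walk (so isolated buses and buses lying on no cycle are legitimately their own singleton SCCs, consistent with how Tarjan's algorithm returns them) and that appending an $x \rightarrow v$ walk to a $v \rightarrow y$ walk yields a valid directed $x \rightarrow y$ walk. Once those conventions are stated, disjointness is immediate, and this is exactly what is needed downstream to conclude that collapsing each SCC into a super node yields the DAG $\tilde{G}^{\text{OPF}}$ on a well-defined, mutually exclusive grouping of buses.
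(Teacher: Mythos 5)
Your proposal is correct and follows essentially the same argument as the paper: assume two SCCs share a node, use transitivity of reachability through that common node to show the union is strongly connected, and contradict maximality. The equivalence-relation framing you add is a clean way to package the same idea, but it is not a different route.
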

\begin{proof}
    We prove by contradiction. Assume $\mathcal{S}_{1}$ and $\mathcal{S}_{2}$ are two SCCs in the given directed graph. If \(\mathcal{S}_1\) and \(\mathcal{S}_2\) intersect, i.e. there exists bus $a \in \mathcal{S}_1$ and $a \in \mathcal{S}_2$. Since $\mathcal{S}_1$ is a SCC, $a \rightarrow \mathcal{S}_1 \backslash \{a\}$ and $\mathcal{S}_1 \backslash \{a\} \rightarrow a$. Since $\mathcal{S}_2$ is a SCC, $a \rightarrow \mathcal{S}_2 \backslash \{a\}$ and $\mathcal{S}_2 \backslash \{a\} \rightarrow a$. 
    
    Because reachability is transitive, $\mathcal{S}_1 \backslash \{a\} \rightarrow a \rightarrow S_2\backslash\{a\} \Rightarrow \mathcal{S}_1\backslash\{a\} \rightarrow \mathcal{S}_2\backslash\{a\}$. Similarly, $\mathcal{S}_2\backslash\{a\} \rightarrow \mathcal{S}_1\backslash\{a\}$. So $\mathcal{S}_1$ and $\mathcal{S}_2$ can merge together to a larger set of SCC, denoted as $\mathcal{S}_m$. 
    
    Since $\mathcal{S}_1 \in \mathcal{S}_m$ and $\mathcal{S}_2 \in \mathcal{S}_m$, $|\mathcal{S}_m| > |\mathcal{S}_1|$ and $|\mathcal{S}_m| > |\mathcal{S}_2|$. $\mathcal{S}_1$ and $\mathcal{S}_2$ are not maximal set of load buses that are any two buses are reachable, i.e. they are not SCCs by definition. So this is proved by contradiction.
\end{proof}

\section{Data Processing for \\Renewable Generation and Imports}
\label{sec:App_renewable}

As CAISO only publishes aggregated demand and net imports, while in order to calculate directed power flow and associated emission flows, we need to know each node's renewable power generation. Another challenge is that in the  original value in the Matpower file provided by the CATS model~\citep{CATS2024}, import values is not recorded and they  will be 0 if directly loaded into OPF solution process. Moreover, there is no cost vector for renewables, making it hard to directly find the market clearing results. To avoid that, we develop a tailored approach to estimate nodal renewable generation based on the total renewables generation published by CAISO. 

Let $\mathbf{p}_{total}^d$ represent total load vector across all buses under which all types of generators are considered. Let $\mathbf{p}_{total}^g$ represent total power generation of all types of generators, and we have $\sum p_{total}^d = \sum p_{total}^g$ always hold. Under the case where renewable generation is not taken into account, we use $\mathbf{p}_{net}^g$ to represent power generation of non-renewable generators and $\mathbf{p}_{net}^d$ represent the net load of the buses after excluding renewable generation. We are interested in finding each node's renewable contribution $p_{i}^d(p_r^g)$, with $r\in \mathcal{R}$ denoting the index for renewable generator. We then have  
\begin{equation}
    p_{i, total}^d = \sum_k p_{i}^d(p_{k,net}^g)+ p_{i}^d(p_r^g).
\end{equation}

To properly assign $p_{i}^d(p_{k,net}^g)$, we propose to solve the OPF model with only the non-renewable generation components, using CATS's model's published cost functions for each generator. This approach allows us to obtain a baseline scenario against which we can compare the impacts of integrating renewable generation into the network. The optimization problem will focus on minimizing operational costs while satisfying load demand and other operational constraints as listed in \eqref{equ:opf}, and we use $OPF(\cdot)$ to denote such a solution process and the mapping function from load to corresponding generation:
\begin{equation}
    \mathbf{p}_{net}^g = OPF(\mathbf{p}_{net}^d);
\end{equation}
After finding the directed power flow based on $\mathbf{p}_{net}^g$ and then determining each non-renewable generator's nodal power contribution,  we are ready to find each node's renewable generation level while satisfying power balance constraints:
\begin{equation}
   p_{i}^d(p_r^g)  = p^d_i - \sum_k p_{i}^d(p_{k,net}^g);
\end{equation}

After getting the proportion of power supplied by renewable generators at each bus, we can also calculate the carbon intensity of the two scenarios with or without considering renewables dispatch. The results shown across this paper are for the scenario with renewables dispatch, which takes the renewables' emission reduction potentials into account.



\section{More Results on California Grid}
\label{sec:more_results}
\subsection{Evaluation Metrics:}
We adopt MAPE and wMAPE for evaluating the algorithm performance. We take the evaluation of power generation as an example here. Within evaluation time horizon, $p_{t}^{cg}$ is the actual power generation data from CAISO and $p_{t}^{og}$ is the power generation result from the OPF model.
\begin{equation}
\label{equ:metrics}
    MAPE = \dfrac{1}{T} \sum_{t=1}^{T} |\dfrac{p_{t}^{cg}-p_{t}^{og}}{p_{t}^{cg}}|; 
    wMAPE = \dfrac{\sum_{t=1}^{T} |p_{t}^{cg}-p_{t}^{og}|}{\sum_{t=1}^{T} |p_{t}^{cg}|}.
\end{equation}

Similarly, we can replace $p$ with $\delta$ for the evaluation of carbon emission rate in \eqref{equ:metrics}. In Fig. \ref{fig:generation_mix} and Fig. \ref{fig:time_series_carbon}, we use these metrics to verify the effectiveness of our proposed approach on generation mix and system-level emissions respectively.


\subsection{Additional Results}
For each month, we randomly select one day as a test sample, and calculate the MAPE and wMAPE across 24 hours. The final results can be found in Table \ref{my-label}. It shows a higher error rate of emission rate calculate in May and June. This could be because the unstable nature of solar and hydro renewable generation during summer months. While for both power generation and emission calculation, our approach can closely track the ground truth value, indicating the our underlying OPF engine can well represent the power dispatch process for CAISO grid. This validation procedure makes sure proposed emission tracking algorithm can provide high fidelity results for locational emission analysis.

\begin{table*}
\caption{Carbon emission (C) and power generation (G) MAPE (\%) rate across 12 months. Ground truth values are from CAISO's published emission and generation data.}
\label{my-label}
\begin{tabularx}{\textwidth} {b|ssssssssssss}
\Xhline{2\arrayrulewidth}
Months  
& Jan & Feb & Mar & Apr & May 
& June & July & Aug & Sept & Oct & Nov & Dec \\ 
\midrule
MAPE(G)   & 3.00      & 6.39          & 8.00   & 7.35  & 13.19    & 11.64   & 8.46    & 6.78     & 7.47   & 4.85    & 3.62 & 1.97  \\ 
wMAPE(G) & 3.05        & 6.55        & 7.61    & 7.89   & 13.27    & 11.26   & 8.71    & 6.43     & 7.75   & 3.51    & 3.16 & 1.81  \\ 
MAPE(C)       & 6.84       & 3.35          & 4.29  & 8.90  & 12.23   & 13.37  & 11.40   & 5.10    & 5.18    & 13.29     & 7.03 & 2.56  \\ 
wMAPE(C)        & 6.08       & 2.58          & 4.09   & 10.16 & 12.79   & 12.09 & 10.48 & 5.16   & 4.52    & 12.09    & 6.93     & 2.71  \\ 
\bottomrule
\end{tabularx}
\end{table*}

In Fig. \ref{fig:distribution}, we plot and compare the distribution of locational average emission rate in three counties: Amador, Sacramento, and San Diego. A clear discrepancy can be observed in the emission profiles among these counties, highlighting the impact of local generations/transmission and geographical factors on emissions. Amador County, characterized by its rural landscape and lower population density, exhibits significantly more stochastic LAE rates compared to the more urbanized areas.
\begin{figure}[h]
    \centering
    \includegraphics[width=0.99\linewidth]{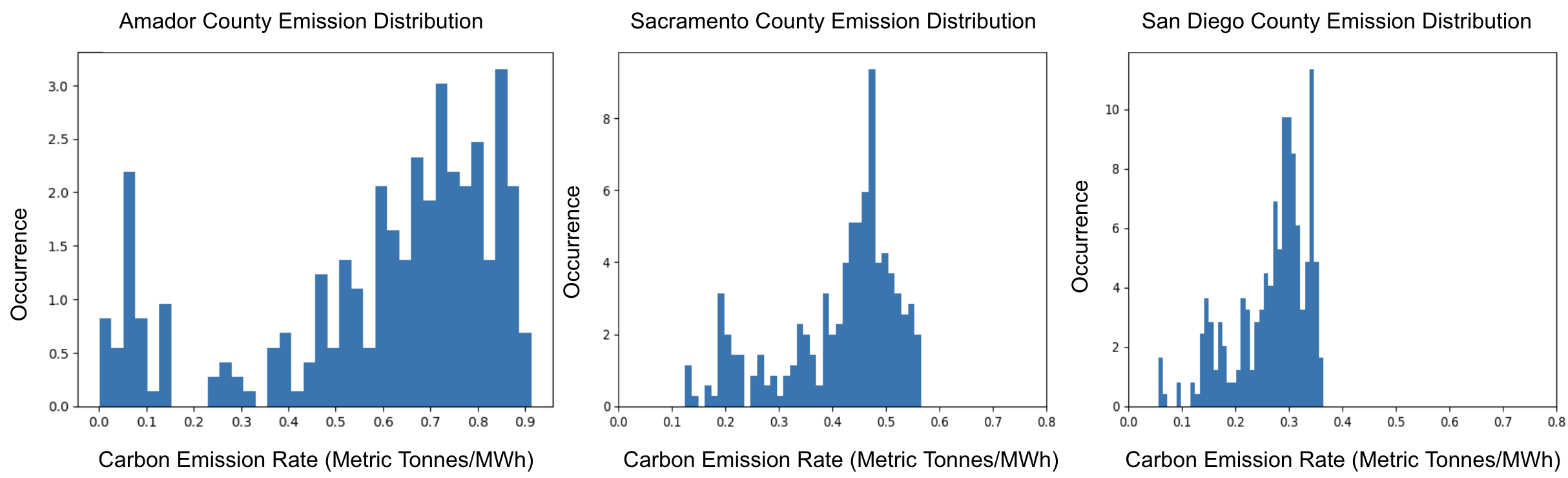}
    \caption{Comparison of distribution of Amador, Sacramento, and San Diego County's regional averal emission rate.}
    \label{fig:distribution}
\end{figure}

\begin{figure}[tb]
\centering
{\includegraphics[width=0.99\linewidth]{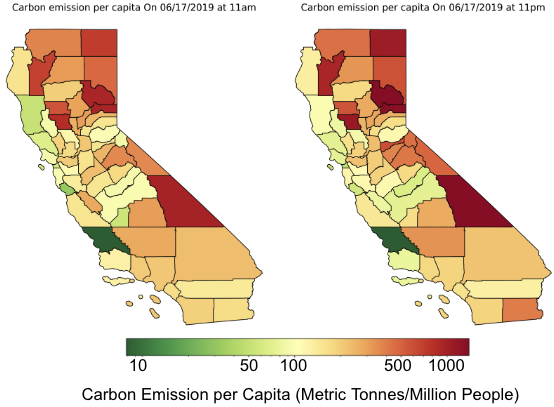}
}
\caption{Geographical illustration of carbon emissions per capita at 11 am and 11 pm on a random selected date.  }
\label{fig:captia}
\end{figure}
To identify the geographical and temporal disparities in emission profiles, Fig. \ref{fig:captia} visualizes the per-capita carbon intensity at 11 a.m. and 11 p.m. on a randomly selected date. The value is calculated by first identifying each county’s total emissions using the locational average emission rate and estimated demand, then dividing by the county’s population. 
We clearly observe that northern Californian counties do not show significant changes in total emissions between daytime and nighttime, mostly likely because  their renewable generation comes from hydro and wind rather than solar, exhibiting less daily variations. In contrast, solar contributions in southern counties are more significant during the daytime, leading to a larger difference in per-capita carbon intensity between day and night. In general, there is over 100 times difference in terms of emissions per capita associated with electricity usage, highlighting the geographical disparities. It is also notable that based on the results of aggregated emissions by county, Los Angeles County and San Bernardino County have the highest total emissions due to their large populations and high power demand.  Despite that, San Bernardino County exhibits a low per capita carbon emission rate because of a high share of renewable contribution, particularly from solar—thanks to its unique geography and some of the largest solar generation capacities in the state \citep{CEC1}. Interestingly, major metropolitan areas in California, such as the San Francisco Bay Area, Los Angeles, and San Diego, have high population concentrations. While they do not generate much renewable energy locally, they maintain a moderate per-capita emission profile by importing greener electricity from neighboring counties. Meanwhile, Inyo County and Imperial County have a high ratio of renewable generation to electricity use \citep{CEC2}.

In Fig. \ref{fig:emission_time}, we illustrate the geographical distribution of locational average emission rate across California at 11 a.m. and 11 p.m. on June 17th, 2019. The figure clearly shows the distinct patterns between daytime and nighttime. Interestingly, the inland counties are observing higher emission rate during night. Counties with higher power demand such as San Francisco, San Diego, and Los Angeles counties experience mild emission rate, partially because the energy mix at these counties are more diverse.

\begin{figure}[!htb]
\centering
\label{fig:time_series}
{\includegraphics[width=0.99\linewidth]{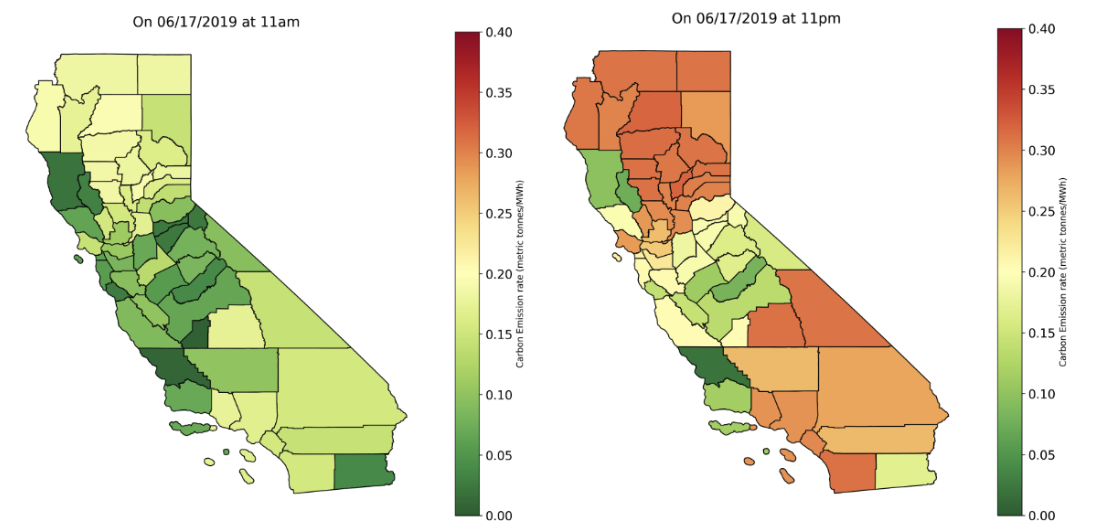}}
\caption{Geographical illustration of carbon emission rate at 11 am and 11 pm on a random selected date.}
\label{fig:emission_time}
\end{figure}

\section{Generator Emission Data}
In Table \ref{table:emission_rate}, we report the emission rate used in this study, which is based on the US Energy Information Administration (EIA)'s report on emissions by plant and by region~\citep{EIA1}. It is noteworthy currently in our simulation we use EIA's published, unified emission rate for Import. Future work can look into each specific import connection's emission rates, which can further improve carbon tracing results.
\begin{table}[h]
\small
\centering
\caption{Carbon emission rate by generator fuel type. t/MWh stands for metric tonnes per megawatt-hour.}
\label{table:emission_rate}
\begin{tabularx}{0.98\linewidth}{X|X X}
\Xhline{2\arrayrulewidth}
  Fuel Type & Emission Rate (t/MWh) & Category\\
  \hline
  Coal& 0.82 & Fossil\\ 
  Petroleum Liquids& 0.656 & Fossil\\ 
  Natural Gas& 0.44 & Fossil\\ 
  Nuclear& 0.0 & Low-Carbon \\
  Hydro & 0.0 & Renewable \\
  Biomass & 0.23 & Renewable\\
  Wind & 0.0 & Renewable \\
  Solar & 0.0  & Renewable\\
  Geothermal & 0.038 & Renewable \\
  Other/Import & 0.43 & Mix \\
\Xhline{2\arrayrulewidth}
\end{tabularx}
\end{table}

\section*{Acknowledgment}
Yuanyuan Shi and Yize Chen were supported in part by Climate Change AI Grant. Yize Chen was supported in part by  Natural Sciences and Engineering Research Council of Canada. The authors would like to thank the discussions with Feng Zhao, Tongxin Zheng and Izudin Lelic at ISO New England, and Deepjyoti Deka at Massachusetts Institute of Technology.

\end{document}